\newtheorem{definition}{Definition}
\newtheorem{proposition}[definition]{Proposition}
\newtheorem{lemma}[definition]{Lemma}
\newtheorem{theorem}[definition]{Theorem}
\newtheorem{corollary}[definition]{Corollary}
\newtheorem{observation}[definition]{Observation}
\author{St\'ephane Le Roux\\Universit\'e Libre de Bruxelles}
\title{Infinite subgame perfect equilibrium\\in the Hausdorff difference hierarchy}
\begin{document}
\maketitle

\begin{abstract}
\noindent Subgame perfect equilibria are specific Nash equilibria in perfect information games in extensive form. They are important because they relate to the rationality of the players. They always exist in infinite games with continuous real-valued payoffs, but may fail to exist even in simple games with slightly discontinuous payoffs. This article considers only games whose outcome functions are measurable in the Hausdorff difference hierarchy of the open sets (\textit{i.e.} $\Delta^0_2$ when in the Baire space), and it characterizes the families of linear preferences such that every game using these preferences has a subgame perfect equilibrium: the preferences without infinite ascending chains (of course), and such that for all players $a$ and $b$ and outcomes $x,y,z$ we have $\neg(z <_a y <_a x \,\wedge\, x <_b z <_b y)$. Moreover at each node of the game, the equilibrium constructed for the proof is Pareto-optimal among all the outcomes occurring in the subgame. Additional results for non-linear preferences are presented.
\end{abstract}

\noindent Keywords: infinite multi-player games in extensive form, subgame perfection, Borel hierarchy, preference characterization, Pareto-optimality.

\section{Introduction}

\noindent Game theory is the theory of competitive interactions between agents having different interests. Until the late 1960's an agent would usually represent a human or group of humans, when game theory was mainly meant for economics and political science. Then game theory was also applied to evolutionary biology~\cite{MP73} and to theoretical computer science~\cite{BL69}, especially to system verification and system synthesis (against given specifications). Classically, the verification or synthesis problem is represented as a game with two players: the system trying to win the game by meeting the specifications, and the environment trying to win the game by preventing the system from doing so. The two players play this game in turn and deterministically on a finite or infinite directed graph, and the key notion is that of winning strategy. For a decade, though, computer scientists such as Ummels~\cite{Ummels06} have been considering multi-player games to represent more complex verification or synthesis problems, \textit{e.g.} relating to distributed systems. The notion of winning strategy is specific to two-player win-lose games, but in a multi-player setting it may be replaced with a faithful extension, namely the famous notion of (pure) Nash equilibrium. It does not only accommodate more than two players, it also allows for refined/quantitative objectives.

The deterministic turn-based games on graphs may be unfolded, usually without much loss of information, into deterministic turn-based games on finite or infinite trees, which have been widely studied in game theory. It is one reason why this article focuses on perfect information games in extensive form (\textit{i.e.} played on trees) and their deterministic strategies, unless otherwise stated.  

Kuhn~\cite{Kuhn53} proved the existence of Nash equilibrium (NE) in finite games with real-valued payoffs. His proofs uses backward induction and constructs a special kind of NE that was later called subgame perfect equilibrium (SPE) by Selten~\cite{Selten65}. An extension of Kuhn's result~\cite{SLR09} characterizes the preferences that always yield finite games with NE/SPE: the acyclic preferences. Also, Escard\'o and Oliva~\cite{EO10} studied generalizations of backward induction in possibly infinite yet well-founded game trees, \textit{i.e.} without infinite plays. The SPE have nice extra properties and are usually preferred over the more general NE: for psychology in a broad sense an SPE amounts to the absence of empty threats, and for system engineering in a broad sense it amounts to stability of a system regardless of the initial state.

The concept of infinite horizon is convenient in economics and also in computer science, \textit{e.g.}, for liveness. Gale and Stewart~\cite{GS53} studied infinite two-player win-lose games, but backward induction is no longer applicable since there may not be leaves to start the induction at. Nevertheless, they proved that if the winning set of each player is open or closed (with the usual topology), one player has a wining strategy. This result was extended by Wolfe~\cite{Wolfe55} for $\Sigma^0_2$ and $\Pi^0_2$ sets, then by other people to more complex sets, and eventually by Martin for Borel~\cite{Martin75} and even quasi-Borel~\cite{Martin90} sets. This is called (quasi-)Borel determinacy.

Mertens and Neymann~\cite[p. 1567]{Mertens87} found that Borel determinacy can be used to show existence of $\epsilon$-NE in infinite games with bounded Borel-measurable real-valued payoffs. By generalizing their technique, an abstract result about point-classes and determinacy~\cite{SLR13} implies a characterization of the preferences that always yield games with NE, in games with (quasi-)Borel-measurable outcome functions, countably many outcomes, and an arbitrary cardinality of players: the preferences whose inverses are well-founded. Then it was shown~\cite{LP14} that two-player antagonistic games (\textit{i.e.} abstract zero-sum games) with finitely many outcomes and (quasi-)Borel-measurable outcome function have SPE.

When the outcome function is a continuous real-valued payoff function, Fudenberg and Levine~\cite{FL83} showed that there is always an SPE in multi-player games. Similar results were obtained recently via an abstract and uniform density argument~\cite{LP14}. The continuity assumption may be slightly relaxed if one is willing to accept approximate SPE. Indeed existence of $\epsilon$-SPE was proved for lower-semicontinuous~\cite{FKMSSV10} and upper-semicontinuous~\cite{PS11} payoffs.

However, when the real-valued payoff function is discontinuous enough and the preferences are not antagonistic, there may be no ($\epsilon$-)SPE, as in the following example which is similar to~\cite[Example 3]{SV03}. Let $a$ and $b$ be two players with preferences $z <_a y <_a x$ and $x <_b z <_b y$. They are alternatively given the possibility to stop and yield outcomes $y$ and $z$, respectively, but the outcome is $x$ if no one ever stops.

\begin{tikzpicture}[node distance=1cm]
  \node(s){start};
  \node(s1)[right of = s]{};
  \node(a1)[right of = s1]{a};
  \node(b1)[right of = a1]{b};
  \node(a2)[right of = b1]{a};
  \node(b2)[right of = a2]{b};
  \node(inf1)[right of = b2]{};
  \node(inf)[right of = inf1]{x};

   \node(y1)[below of = a1]{y};
   \node(z1)[below of = b1]{z};
   \node(y2)[below of = a2]{y};
   \node(z2)[below of = b2]{z};
    
  \draw [->] (s) -- (a1);
  \draw [->] (a1) -- (b1);
  \draw [->] (b1) -- (a2);
  \draw [->] (a2) -- (b2); 
  \draw [->] (a1) -- (y1);
  \draw [->] (b1) -- (z1);
  \draw [->] (a2) -- (y2); 
  \draw [->] (b2) -- (z2); 
  \draw [dashed] (b2) -- (inf);
  
 \end{tikzpicture}

\noindent In addition, a real-valued two-player game~\cite{FKMSSSV14} was recently designed with the following features: it has a similar preference pattern as in the example above, and it has no $\epsilon$-SPE for small enough $\epsilon$ even when the players are allowed to use mixed strategies at every node of the game tree. All this shows that Mertens~\cite{Mertens87} was right when writing that "Subgame perfectness is a completely different issue" from NE. This article solves the issue partially, and the contribution is two-fold. First, it characterizes the linear preferences that always yield SPE in games with outcome functions in the Hausdorff difference hierarchy of the open sets: the preferences that are void of infinite ascending chains (of course) and of the \emph{SPE killer} from the example above. Said otherwise, if $\neg(z <_a y <_a x \,\wedge\, x <_b z <_b y)$ holds for all players $a$ and $b$ and all outcomes $x$, $y$ and $z$ of a multi-player game with outcome function in the difference hierarchy (and preferences without infinite ascending chains), the game has an SPE, and even a \emph{global-Pareto} one, as in Definition~\ref{def:sp-ip-gpe}. Second contribution, the characterization still holds for two-player games with strict weak order preferences, but no longer for three-player games. (Strict weak orders are important since they are an abstraction of the usual preference order over the real-valued payoff functions.)

Section~\ref{sec:tb} consists of definitions; Section~\ref{sec:mp-lop} proves the characterization for many players and linear preferences; Section~\ref{sec:tp-swop} proves the characterization for two players and strict weak order preferences.

\paragraph{Related works} Characterizing the preferences that guarantee existence of NE in interesting classes of games is not a new idea: earlier than the two examples above (\cite{SLR09}, \cite{SLR13}), Gimbert and Zielonka~\cite{GZ05} "characterise the family of payoff mappings for which there always exist optimal positional strategies for both players" in some win-lose games played on finite graphs. Also, \cite{SLR14} characterizes  the preferences that guarantee existence of NE in determined, countable two-player game forms. 

The notion of SPE has also been studied in connection with system verification and synthesis, at low levels of the Borel hierarchy: in~\cite{Ummels06} with qualitative objectives, in~\cite{BBDG12} with quantitative objective for reachability, and in~\cite{BBMR15} with quantitative objectives and a weak variant of SPE.

Finally, some specific infinite games in extensive form (such as the dollar auction) and especially their SPE have been studied using co-algebraic methods in ~\cite{LP12} and \cite{AW12}.

\section{Technical background}\label{sec:tb}

\noindent The games in this article are built on infinite trees, which may be defined as prefix-closed sets of finite sequences. The elements of a tree are called nodes. Intuitively, a node represents both a position in the tree and the only path from the root to this position.

\begin{definition}[Tree]
Let $\Sigma$ be a set. 
\begin{itemize}
\item $\Sigma^*$ ($\Sigma^\omega$) is the set of finite (infinite) sequences over $\Sigma$, and a tree over $\Sigma$ is a subset $T$ of $\Sigma^*$ such that $\gamma \sqsubseteq \delta$ (\textit{i.e.} $\gamma$ is a prefix of $\delta$) and $\delta\in T$ implies $\gamma\in T$.
\item For a node $\gamma$ in a tree $T$, let $\mathrm{succ}(T,\gamma) := \{\delta\in T\,\mid\, \gamma\sqsubseteq\delta\,\wedge\, |\delta| = |\gamma|+1\}$, where $|\gamma|$ is the length of $\gamma$.
\item A tree $T$ is pruned if $\mathrm{succ}(T,\gamma) \neq \emptyset$ for all $\gamma\in T$. 
\item Let $T$ be a tree over $\Sigma$. The set $[T]$ is made of the infinite paths of $T$, namely the elements of $\Sigma^\omega$ whose every finite prefix is in $T$.
\item Let $T$ be a tree over $\Sigma$. For $\gamma\in T$ let $T_\gamma := \{\delta \in \Sigma^*\,\mid\, \gamma \delta \in T\}$.
\end{itemize}
\end{definition}

\noindent In this article the outcomes of a game correspond to some partition of the infinite paths of the game tree, and the subsets of the partition are restricted to the Hausdorff difference hierarchy of the open sets. This hierarchy is defined in, \textit{e.g.}, \cite[22.E]{Kechris95}, but a probably-folklore result~\cite[Section 2.4]{LP15} gives an equivalent presentation, which is in turn rephrased in Definition~\ref{def:qdh} below. These new definitions facilitate the proofs by induction in this article. Then, the Hausdorff-Kuratowski theorem (see, \textit{e.g.}, \cite[Theorem 22.27]{Kechris95}) implies that, in the Baire space (\textit{i.e.} $\mathbb{N}^\mathbb{N}$), the difference hierarchy is equal to $\Delta^0_2$, the sets that are both countable unions of closed sets and countable intersections of open sets. This equality tells how low the difference hierarchy lies in the Borel hierarchy.

For every pruned tree $T$, let $\{\gamma [T_\gamma]\,\mid\,\gamma\in T\}$ be a basis for the open subsets of $[T]$. Definition~\ref{def:cdh} below is a special case of a more general definition that can be found, \textit{e.g.}, in \cite[22.E]{Kechris95}.

\begin{definition}[Difference hierarchy]\label{def:cdh}
Let $T$ be a pruned tree, $\theta > 0$ be a countable ordinal, and $(A_\eta)_{\eta < \theta}$ be an increasing sequence of open subsets of $[T]$. $D_\theta((A_\eta)_{\eta < \theta})$ is defined as below. 
\begin{align*}
x \in D_\theta((A_\eta)_{\eta < \theta}) \,:\Leftrightarrow\, & x \in \cup_{\eta < \theta}A_\eta \textrm{ and the least } \eta < \theta \textrm{ with } x\in A_\eta \textrm{ has parity opposite to that of } \theta.
\end{align*}
And $D_\theta([T]) := \{D_\theta((A_\eta)_{\eta < \theta})\mid \forall \eta < \theta, A_\eta \textrm{ is an open subset of }[T]\}$.
\end{definition}

\begin{observation}\label{obs:cdh}
$D_{\theta +1}((A_\eta)_{\eta < \theta +1}) = A_\theta \setminus D_\theta((A_\eta)_{\eta < \theta})$
\end{observation}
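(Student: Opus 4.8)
The plan is to unfold both sides of the claimed equality pointwise and to reduce everything to one elementary fact about ordinals: $\theta$ and $\theta+1$ always have opposite parity. Recall that the parity of an ordinal $\eta$ is read off from its unique decomposition $\eta = \lambda + n$ with $\lambda$ a limit ordinal or $0$ and $n < \omega$; since $\theta+1 = \lambda + (n+1)$ whenever $\theta = \lambda + n$, passing from $\theta$ to $\theta+1$ flips the parity. This single fact is what makes the set-difference on the right-hand side correspond exactly to the change of the defining parity condition on the left.

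First I would fix an arbitrary $x \in [T]$ and split into two cases according to whether $x \in \bigcup_{\eta < \theta} A_\eta$. In the case $x \in \bigcup_{\eta < \theta} A_\eta$, let $\mu$ be the least $\eta < \theta$ with $x \in A_\eta$. Because $\mu < \theta < \theta+1$, this same $\mu$ is also the least index below $\theta+1$ at which $x$ appears, so the ``least entry index'' governing membership in $D_\theta$ and in $D_{\theta+1}$ is literally the same ordinal. Membership in $D_\theta$ then asks whether $\mu$ has parity opposite to $\theta$, and membership in $D_{\theta+1}$ asks whether $\mu$ has parity opposite to $\theta+1$; by the parity-flip fact these two conditions are mutually exclusive and exhaustive, so $x \in D_{\theta+1}$ iff $x \notin D_\theta$. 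Moreover monotonicity of the sequence gives $A_\mu \subseteq A_\theta$, hence $x \in A_\theta$, so on the right-hand side $x \in A_\theta \setminus D_\theta$ iff $x \notin D_\theta$. The two descriptions coincide.

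In the complementary case $x \notin \bigcup_{\eta < \theta} A_\eta$, I would further distinguish whether $x \in A_\theta$. If $x \notin A_\theta$ then $x$ lies in no $A_\eta$ for $\eta < \theta+1$, so it belongs neither to $D_{\theta+1}$ nor to $A_\theta \setminus D_\theta$. If $x \in A_\theta$ then $\theta$ itself is the least entry index below $\theta+1$, and since $\theta$ and $\theta+1$ have opposite parity, $x$ automatically lands in $D_{\theta+1}$; on the right it lies in $A_\theta$ and, having no entry index below $\theta$, outside $D_\theta$, so again both sides agree.

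I do not expect a genuine obstacle here, as the statement is essentially a bookkeeping identity; the only point requiring care is the boundary behaviour at the fresh index $\theta$, namely separating the subcase where $x$ first enters the union exactly at stage $\theta$ from the subcase where it entered strictly earlier. Getting the parity convention right, so that the newly introduced level $A_\theta$ is always counted with parity opposite to that of the least-entry indices already accounted for by $D_\theta$, is the one place where a sign error could creep in.
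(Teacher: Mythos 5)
Your proof is correct, and it is exactly the routine unfolding of Definition~\ref{def:cdh} that the paper has in mind: the Observation is stated without proof precisely because the pointwise case analysis on the least entry index, together with the parity flip from $\theta$ to $\theta+1$ and the monotonicity giving $\bigcup_{\eta<\theta+1}A_\eta = A_\theta$, is considered immediate. Nothing is missing.
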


\noindent Definition~\ref{def:qdh}, Lemma~\ref{lem:cdh-incl-adh}, and Proposition~\ref{prop:cdh-eq-adh} relate to \cite[Section 2.4]{LP15}.

\begin{definition}[Quasi-difference sets]\label{def:qdh}
Let $T$ be a pruned tree. The set $\mathcal{D}([T])$ is defined by transfinite induction below.
\begin{itemize}
\item Every open set of $[T]$ is in $\mathcal{D}([T])$.
\item Let $(\gamma_i)_{i\in I}$ be pairwise non-comparable nodes in $T$, and for all $i\in I$ let $D_i\in \mathcal{D}([T])$ be such that $D_i \subseteq \gamma_i[T_{\gamma_i}]$. Then $\cup_{i\in I} D_i \in \mathcal{D}([T])$.
\item The complement in $[T]$ of a set in $\mathcal{D}([T])$ is also in $\mathcal{D}([T])$.
\end{itemize}
\end{definition}

\noindent One can prove Observation~\ref{obs:qdh-inter-ball} below by induction along Definition~\ref{def:qdh}.

\begin{observation}\label{obs:qdh-inter-ball}
Given a node $\gamma$ of a pruned tree $T$, and $A \in \mathcal{D}([T])$, $\gamma [T_\gamma] \cap A \in \mathcal{D}([T])$.
\end{observation}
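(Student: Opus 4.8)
The plan is to prove the statement for a fixed node $\gamma$ by induction along Definition~\ref{def:qdh}, i.e.\ to show that $\gamma[T_\gamma]\cap A\in\mathcal{D}([T])$ by case analysis on the last clause used to place $A$ in $\mathcal{D}([T])$. Before starting the induction I would record one topological fact to be used in the hardest case: the ball $\gamma[T_\gamma]$ is not merely open but \emph{clopen} in $[T]$, since its complement is $\bigcup\{\delta[T_\delta]\mid \delta\in T,\ |\delta|=|\gamma|,\ \delta\neq\gamma\}$, a union of basic open balls. Indeed the balls anchored at the nodes of length $|\gamma|$ partition $[T]$, so an infinite path avoids $\gamma[T_\gamma]$ exactly when its prefix of length $|\gamma|$ is some node $\delta\neq\gamma$.

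The base case and the union case are routine. If $A$ is open, then $\gamma[T_\gamma]\cap A$ is open, hence in $\mathcal{D}([T])$. If $A=\bigcup_{i\in I}D_i$ comes from the second clause, with the $\gamma_i$ pairwise non-comparable and $D_i\subseteq\gamma_i[T_{\gamma_i}]$, then $\gamma[T_\gamma]\cap A=\bigcup_{i\in I}(\gamma[T_\gamma]\cap D_i)$; each $\gamma[T_\gamma]\cap D_i$ lies in $\mathcal{D}([T])$ by the induction hypothesis and is still contained in $\gamma_i[T_{\gamma_i}]$, so after discarding the empty terms the same second clause keeps the union in $\mathcal{D}([T])$.

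The one delicate case is the complement, $A=[T]\setminus B$, and it is where I expect the real obstacle: the third clause only supplies complementation relative to the whole space $[T]$, whereas what I actually want is the complement of $B$ relative to the subball $\gamma[T_\gamma]$. This is exactly where the clopen remark pays off. Writing $C:=\gamma[T_\gamma]\cap B$, the induction hypothesis gives $C\in\mathcal{D}([T])$ with $C\subseteq\gamma[T_\gamma]$. I would then consider the family consisting of $C$ together with all sibling balls $\delta[T_\delta]$ for $\delta\in T$, $|\delta|=|\gamma|$, $\delta\neq\gamma$: their anchor nodes all have length $|\gamma|$ and are distinct, hence pairwise non-comparable, and each member sits inside its anchor ball, so the second clause makes $D:=C\cup\bigcup_{\delta}\delta[T_\delta]=C\cup([T]\setminus\gamma[T_\gamma])$ a member of $\mathcal{D}([T])$. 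Applying the third clause gives $[T]\setminus D\in\mathcal{D}([T])$, and a direct computation yields $[T]\setminus D=\gamma[T_\gamma]\setminus C=\gamma[T_\gamma]\setminus B=\gamma[T_\gamma]\cap A$, which is what is needed.

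In short, the only nontrivial point is the mismatch between relative and absolute complement, and the resolution is to pad $C$ with the complementary same-length balls so that a single global complement cuts out precisely the desired relative complement; the clopen-ness of $\gamma[T_\gamma]$ is what makes this padding a legitimate instance of the second clause.
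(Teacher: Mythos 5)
Your proof is correct and follows exactly the route the paper indicates: the paper merely remarks that the observation ``can be proved by induction along Definition~\ref{def:qdh}'' and supplies no further details. Your treatment of the complement case --- padding $C=\gamma[T_\gamma]\cap B$ with the pairwise non-comparable sibling balls of length $|\gamma|$ so that a single global complement carves out the relative complement $\gamma[T_\gamma]\setminus B$ --- is precisely the one non-routine step needed to make that induction go through, and it is handled correctly.
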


\begin{lemma}\label{lem:cdh-incl-adh}
$D_\theta \subseteq \mathcal{D}([T])$ for all non-zero countable ordinal $\theta$ and all pruned tree $T$.
\begin{proof}
By induction on $\theta$, which holds for $\theta = 1$ since $D_1([T])$ is made of the open subsets of $[T]$. Let $\theta > 1$ be an ordinal and let $A \in D_\theta$, so $A = D_{\theta}((A_\eta)_{\eta < \theta})$ for some family $(A_{\eta})_{\eta < \theta}$ of open subsets of $[T]$ that is increasing for the inclusion. Every $A_\eta$ is open so it can be written $\cup_{i\in I}\gamma_{\eta,i}[T_{\gamma _{\eta,i}}]$ where $\gamma_{\eta,i}$ and $\gamma_{\eta,j}$ are not proper prefixes of one another (but are possibly equal) for all $\eta < \theta$ and $i,j\in I$. We can also require some minimality for all $\gamma_{\eta,i}$, more specifically $\gamma \sqsubset \gamma_{\eta,i} \Rightarrow \neg(\gamma[T_\gamma] \subseteq A_\eta)$. Let $F$ consists of the minimal prefixes among $\{\gamma_{\eta,i}\,\mid\,i\in I \wedge \eta < \theta\}$, and for all $\gamma\in F$ let $f(\gamma)$ be the least $\eta$ such that $\gamma = \gamma_{\eta,i}$ for some $i\in I$. So $f(\gamma) < \theta$ for all $\gamma \in F$, and $A \subseteq \cup_{\eta < \theta} A_\eta = \cup_{\gamma\in F}\gamma[T_\gamma]$, so $A = \cup_{\gamma\in F}A \cap \gamma[T_\gamma]$. Let $\gamma \in F$ and let us make a case disjunction to show that $A \cap \gamma[T_\gamma] \in \mathcal{D}([T])$. First case, $f(\gamma)$ and $\theta$ have the same parity, so $A \cap \gamma[T_\gamma] = D_{f(\gamma)}((A_\eta \cap \gamma[T_\gamma])_{\eta < f(\gamma)}) \in D_{f(\gamma)}([T])$, so $A \cap \gamma[T_\gamma] \in \mathcal{D}([T])$ by induction hypothesis. Second case, $f(\gamma)$ and $\theta$ have opposite parity, so 
\begin{align*}
A \cap \gamma[T_\gamma] \quad=\quad & D_{f(\gamma)+1}((A_\eta \cap \gamma[T_\gamma])_{\eta < f(\gamma)+1})\textrm{ by Definition~\ref{def:cdh}},\\
	=\quad & (A_{f(\gamma)}\cap \gamma[T_\gamma]) \setminus D_{f(\gamma)}((A_\eta \cap \gamma[T_\gamma])_{\eta < f(\gamma)})) \textrm{ by Observation~\ref{obs:cdh}},\\
	=\quad & \gamma[T_\gamma] \setminus D_{f(\gamma)}((A_\eta \cap \gamma[T_\gamma])_{\eta < f(\gamma)})) \textrm{ by definition of }F\ni\gamma\textrm{ and }f(\gamma),\\
	\in\quad & \mathcal{D}([T]) \textrm{ by induction hypothesis and Observation~\ref{obs:qdh-inter-ball}.} 
\end{align*}
\noindent Therefore $A\in \mathcal{D}([T])$.
\end{proof}
\end{lemma}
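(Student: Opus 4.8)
The plan is to argue by transfinite induction on the ordinal $\theta$. For the base case $\theta = 1$, the only admissible family has the single term $A_0$, and since $0$ is even while $1$ is odd, the parity condition of Definition~\ref{def:cdh} reduces to $x \in A_0$; hence $D_1([T])$ is exactly the collection of open subsets of $[T]$, which lie in $\mathcal{D}([T])$ by the first clause of Definition~\ref{def:qdh}. So the base case is immediate and all the content is in the successor bookkeeping.

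For the inductive step I fix $\theta > 1$ and a set $A = D_\theta((A_\eta)_{\eta < \theta})$ with $(A_\eta)_{\eta<\theta}$ an increasing family of open subsets of $[T]$. First I would write each open set $A_\eta$ as a union of basic balls $\gamma[T_\gamma]$, and from all balls used across all levels I would extract an antichain $F$ consisting of the $\sqsubseteq$-minimal such nodes; for $\gamma \in F$ I record the least level $f(\gamma) < \theta$ at which $\gamma$ is used, so that $\gamma[T_\gamma]$ is entirely absorbed into $A_{f(\gamma)}$ but into no strictly earlier level in its entirety. Since $F$ covers $\cup_{\eta<\theta}A_\eta \supseteq A$ and its nodes are pairwise non-comparable, it then suffices to show $A \cap \gamma[T_\gamma] \in \mathcal{D}([T])$ for each $\gamma \in F$ and to glue the pieces with the union clause of Definition~\ref{def:qdh}.

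The heart of the argument is a parity-driven collapse on each ball. For $x \in \gamma[T_\gamma]$ the least level at which $x$ enters the family is at most $f(\gamma)$, so membership of $x$ in $A$ depends only on the levels $\eta \le f(\gamma)$ restricted to the ball. When $f(\gamma)$ and $\theta$ share parity, the level $f(\gamma)$ cannot contribute a point of $A$, and one checks that $A \cap \gamma[T_\gamma] = D_{f(\gamma)}((A_\eta \cap \gamma[T_\gamma])_{\eta < f(\gamma)})$, a difference set of rank $f(\gamma) < \theta$, hence in $\mathcal{D}([T])$ by the induction hypothesis. When the parities differ, the same bookkeeping together with Observation~\ref{obs:cdh} rewrites $A \cap \gamma[T_\gamma]$ as $\gamma[T_\gamma] \setminus D_{f(\gamma)}((A_\eta \cap \gamma[T_\gamma])_{\eta < f(\gamma)})$, using that the full ball equals $A_{f(\gamma)} \cap \gamma[T_\gamma]$; this is a relative complement of a rank-$f(\gamma)$ difference set, so it lands in $\mathcal{D}([T])$ via the induction hypothesis, the complementation clause, and Observation~\ref{obs:qdh-inter-ball}.

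The step I expect to require the most care is setting up the frontier $F$ correctly. I must arrange the ball decompositions of the $A_\eta$ to be minimal, in the sense that no chosen ball is strictly contained in $A_\eta$ while a shorter prefix already is, so that $F$ is genuinely an antichain and $f(\gamma)$ is well defined with the crucial property that $\gamma[T_\gamma] \subseteq A_{f(\gamma)}$ whereas earlier levels do not swallow the whole ball. Once that minimality is in place, the parity case analysis is routine, and the antichain property is exactly what licenses the non-comparable union clause of Definition~\ref{def:qdh} to reassemble the local pieces into $A$, yielding $A \in \mathcal{D}([T])$.
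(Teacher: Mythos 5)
Your proposal is correct and follows essentially the same route as the paper's proof: the same minimal-ball frontier $F$ with the level function $f(\gamma)$, the same parity case split on each ball reducing to $D_{f(\gamma)}$ or its relative complement in $\gamma[T_\gamma]$, and the same reassembly via the non-comparable union clause of Definition~\ref{def:qdh}. No gaps.
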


\noindent Proposition~\ref{prop:cdh-eq-adh} below shows that the quasi-difference sets coincide with the sets in the difference hierarchy for countable trees, just like quasi-Borel sets~\cite{Martin90} and Borel sets coincide on Polish spaces.

\begin{proposition}\label{prop:cdh-eq-adh}
$\mathcal{D}([T]) = \cup_{\theta < \omega_1}D_\theta([T])$ for all countable pruned tree $T$.
\begin{proof}
$\cup_{\theta < \omega_1}D_\theta \subseteq \mathcal{D}([T])$ was already proved in Lemma~\ref{lem:cdh-incl-adh}, so let $A\in \mathcal{D}([T])$. Let us prove that $A\in D_\theta([T])$ for some $\theta$, by induction on the definition of $\mathcal{D}([T])$. Base case, if $A$ is open, $A\in D_1([T])$.

Second case, let $(\gamma_i)_{i\in I}$ be pairwise non-comparable nodes in $T$, and let $A_i\in \mathcal{D}([T]) \cap \gamma_i[T_{\gamma_i}]$ for all $i\in I$, such that $A = \cup_{i\in I} A_i$. By induction hypothesis $A_i \in D_{\theta_i}$ for some $\theta_i < \omega_1$. Let $\theta := \sup_{i\in I} \theta_i$, so $\theta < \omega_1$ by countability of $T$, and $A_i \in D_{\theta}$ for all $i\in I$. For all $i\in I$ let $(A_{i,\eta})_{\eta < \theta}$ be an increasing sequence of open sets such that $A_i = D_{\theta}((A_{i,\eta})_{\eta < \theta})$. So $A = \cup_{i\in I} D_{\theta}((A_{i,\eta})_{\eta < \theta}) = D_{\theta}((\cup_{i\in I}A_{i,\eta})_{\eta < \theta})$, which shows that $A \in D_\theta([T])$.

Third case, $A$ is the complement of $B \in D_\theta([T])$ for some $\theta < \omega_1$. So $B = D_{\theta}((B_{\eta})_{\eta < \theta})$ for some increasing sequence of open sets of $[T]$. Let $B_{\theta} = [T]$, so $A = D_{\theta + 1}((B_{\eta})_{\eta < \theta + 1})$.
\end{proof}
\end{proposition}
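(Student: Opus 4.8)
The plan is to prove both inclusions, as the statement equates the inductively-defined class $\mathcal{D}([T])$ with the union of the classical difference classes $\cup_{\theta < \omega_1}D_\theta([T])$ over a countable pruned tree $T$. One inclusion, namely $\cup_{\theta < \omega_1}D_\theta([T]) \subseteq \mathcal{D}([T])$, is already furnished by Lemma~\ref{lem:cdh-incl-adh}, so the real work is the reverse inclusion $\mathcal{D}([T]) \subseteq \cup_{\theta < \omega_1}D_\theta([T])$.

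For the reverse inclusion I would take an arbitrary $A \in \mathcal{D}([T])$ and argue by transfinite induction along the three-clause inductive definition of $\mathcal{D}([T])$ in Definition~\ref{def:qdh}, producing in each case an explicit $\theta < \omega_1$ together with an increasing family $(A_\eta)_{\eta<\theta}$ of open sets witnessing $A = D_\theta((A_\eta)_{\eta<\theta})$. The base case is immediate: an open set lies in $D_1([T])$. The complementation case is handled by Observation~\ref{obs:cdh}: if $A = [T]\setminus B$ with $B = D_\theta((B_\eta)_{\eta<\theta})$, then adjoining the top open set $B_\theta := [T]$ to the family gives $A = D_{\theta+1}((B_\eta)_{\eta<\theta+1})$, since $D_{\theta+1} = A_\theta \setminus D_\theta$ and $A_\theta = [T]$ here.

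The delicate case, and where I expect the main obstacle to lie, is the countable-union clause: given pairwise non-comparable nodes $(\gamma_i)_{i\in I}$ and sets $A_i \in \mathcal{D}([T])$ with $A_i \subseteq \gamma_i[T_{\gamma_i}]$ and $A = \cup_{i\in I}A_i$, I must realize $A$ as a single difference set. The key idea is to \emph{align the ordinal levels}: by the induction hypothesis each $A_i \in D_{\theta_i}$, and since $T$ is countable the index set $I$ is countable, so $\theta := \sup_{i\in I}\theta_i < \omega_1$. The subtlety is that each $A_i$ must be re-expressed at the common level $\theta$ rather than its own $\theta_i$; this requires that membership in $D_{\theta_i}$ implies membership in $D_\theta$ for $\theta \geq \theta_i$ (of matching usability, achieved by padding the increasing open families up to length $\theta$, e.g.\ repeating the top set). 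Once every $A_i = D_\theta((A_{i,\eta})_{\eta<\theta})$ with the \emph{same} $\theta$, I would exploit the disjointness coming from the non-comparability of the $\gamma_i$ (each $A_i$ lives in its own basic open ball $\gamma_i[T_{\gamma_i}]$, and these balls are pairwise disjoint) to commute the union with the difference operator, yielding $A = \cup_{i\in I} D_\theta((A_{i,\eta})_{\eta<\theta}) = D_\theta((\cup_{i\in I}A_{i,\eta})_{\eta<\theta})$.

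The crux to verify carefully is precisely this last commutation $\cup_i D_\theta((A_{i,\eta})_\eta) = D_\theta((\cup_i A_{i,\eta})_\eta)$. It holds because for $x \in A$ the point $x$ belongs to a unique ball $\gamma_i[T_{\gamma_i}]$, so $x \in \cup_j A_{j,\eta}$ iff $x\in A_{i,\eta}$ for that single $i$; hence the least $\eta$ with $x \in \cup_j A_{j,\eta}$ equals the least $\eta$ with $x\in A_{i,\eta}$, and the parity condition defining $D_\theta$ transfers verbatim. Thus $x$ lies in the right-hand side iff it lies in $D_\theta((A_{i,\eta})_\eta) = A_i$, i.e.\ iff $x\in A$. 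Disjointness of the balls is what makes this work; without it the parities from different components could interfere. Having established both inclusions, the proposition follows.
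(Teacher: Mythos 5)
Your proposal is correct and follows essentially the same route as the paper's proof: the forward inclusion via Lemma~\ref{lem:cdh-incl-adh}, and the reverse inclusion by induction along Definition~\ref{def:qdh}, aligning the levels at $\theta := \sup_{i\in I}\theta_i < \omega_1$ in the union case (using disjointness of the balls $\gamma_i[T_{\gamma_i}]$ to commute the union with $D_\theta$) and adjoining $B_\theta := [T]$ in the complementation case. The only minor caveat is that your suggested padding by repeating the top set establishes $D_{\theta_i}\subseteq D_\theta$ directly only when $\theta_i$ and $\theta$ have the same parity; for opposite parities one needs the standard monotonicity of the difference hierarchy, a detail the paper likewise leaves implicit.
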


\noindent Informally, a play of a game starts at the root of an infinite game tree and at each stage of the game the unique owner of the current node chooses a child of the node. The articles \cite{SLR13} and \cite{LP14} used game trees of the form $C^*$ because it was more convenient and done without much loss of generality, but this article works with general pruned trees because they will be cut in a non-uniform way. Moreover pruned trees are general enough, since leaves in infinite games can be simulated by the pseudo-leaves defined below.

\begin{definition}[Game, subgame, pseudo-leaf]\label{def:g-sg-pl}
An infinite game $g$ is a tuple $\langle A, T, d, O, v , (\prec_a)_{a\in A}\rangle$ complying with the following.
\begin{itemize}
\item $A$ is a non-empty set (of players).
\item $T$ is a non-empty pruned tree (of possible finite plays).
\item $d:T\to A$ (assigns a decision maker to each stage of the game).
\item $O$ is a non-empty set (of possible outcomes of the game). 
\item $v:[T]\to O$ (uses outcomes to value the infinite plays in the tree).
\item Each $\prec_a$ is a binary relation over $O$ (modelling the preference of player $a$).
\end{itemize}
For $\gamma\in T$, the subgame $g_\gamma$ is defined by $d\gamma : T_\gamma \to A$ such that $d_\gamma(\delta) := d(\gamma\delta)$ and by $v_\gamma : [T_\gamma] \to O$ such that $v_\gamma(p) := v(\gamma p)$. A $x$-pseudo-leaf of $g$ is a shortest node $\gamma\in T$ such that only the outcome $x$ occurs in $g_\gamma$.
\end{definition}

\begin{definition}[strategy profile, induced play, global-Pareto equilibrium]\label{def:sp-ip-gpe}
Let $g = \langle A, T, d, O, v , (\prec_a)_{a\in A}\rangle$ be a game. 

\begin{itemize}
\item A strategy profile is a function $s:T\to T$ such that $s(\gamma) \in \mathrm{succ}(T,\gamma)$ for all $\gamma\in T$. Let $S_g$ be the set of the strategy profiles for $g$. For $\gamma\in T$ and $s\in S_g$ the subprofile $s_\gamma: T_\gamma \to T_\gamma$ is defined by the equality $\gamma s_\gamma(\delta) = s(\gamma \delta)$.

\item For $\gamma\in T$ and $s\in S_g$, the play $p = p^\gamma(s)$ induced by $s$ at $\gamma$ is defined inductively by $p_0\dots p_{|\gamma|-1} := \gamma$ and $p_n := s(p_0\dots p_{n-1})$ for all $n > |\gamma|$.

\item A Nash equilibrium is a profile $s\in S_g$ such that
\[NE_g(s) := \forall s'\in S_g,\forall a\in A, \,\neg(v \circ p^\epsilon (s) \prec_a v \circ p^\epsilon(s') \,\wedge\, (\forall \gamma\in T, s(\gamma) \neq s'(\gamma) \Rightarrow d(\gamma) = a))\]
A subgame perfect equilibrium is a profile $s\in S_g$ such that $NE_{g_\gamma}(s_\gamma)$ for all $\gamma\in T$.

\item Let $O'\subseteq O$. One says that $x\in O'$ is Pareto-optimal in $O'$ if for all $y\in O'$ and $a\in A$ such that $x \prec_a y$ there exists $b\in B$ such that $y \prec_b x$. A global-Pareto Nash equilibrium (GP-NE) is an NE whose induced outcome is Pareto-optimal in the outcomes occurring in the underlying game. A GP-SPE is a profile that induces a GP-NE in every subgame.
\end{itemize}
\end{definition}

\noindent The proofs in this article do not build SPE by mere backward induction, but more generally by recursively refining rational behavioral promises. At each stage the refinement is optimal given the existing promises and regardless of the future ones. Since a promise not to choose a specific successor of a given node cannot be represent by a strategy profile, the more general notion of quasi-profile is defined below.

\begin{definition}[quasi profile]
Let $g = \langle A, T, d, O, v , (\prec_a)_{a\in A}\rangle$ be a game.
\begin{itemize}
\item A quasi profile is a multivalued function $q:T\multimap T$ such that $\emptyset \neq q(\gamma)\subseteq  \mathrm{succ}(\gamma)$ for all $\gamma\in T$. Let $Q_g$ be the set of the quasi profiles for $g$.  For $\gamma\in T$ and $q\in Q_g$ the sub-quasi-profile $q_\gamma: T_\gamma \multimap T_\gamma$ is defined by the equality $\gamma q_\gamma(\delta) = q(\gamma \delta)$.

\item For $\gamma\in T$ and $q\in Q_g$, the tree induced by $q$ starting at $\gamma$ is defined inductively by $\epsilon\in T_\gamma(q)$, where $\epsilon$ is the empty sequence, and $\delta\in T_\gamma(q) \Rightarrow  q_\gamma(\delta) \subseteq T_\gamma(q)$.

\item Let $q\in Q_g$. Let $(\gamma_i)_{i\in I}$ be the nodes of $T$ such that $\gamma_i \notin q(\gamma)$ for all $\gamma\in T$, and let $G(g,q) := \{g_{\gamma_i}\mid_{T_{\gamma_i}(q)}\}_{i\in I}$.
\end{itemize}
\end{definition}

\noindent Making a promise in a game $g$ by defining a quasi-profile $q$ splits the game into "smaller" games, formally \textit{via} $G(g,q)$. If the promise is rational, these "smaller" games can be processed independently since gluing any of their respective SPE will yield an SPE for $g$. Towards this, Observation~\ref{obs:tree-partition} below suggests that the recursive refinement will lead to a fully defined strategy profile of $g$, if performed a sufficiently great (ordinal) number of times.

\begin{observation}\label{obs:tree-partition}
Let $g$ be a game on a tree $T$, let $q$ be a quasi profile for $g$, and let $G(g,q) = \{g_{\gamma_i}\mid_{T_{\gamma_i}(q)}\}_{i\in I}$. Then $\{\gamma_iT_{\gamma_i}(q)\}_{i\in I} $ is a partition of $T$.
\end{observation}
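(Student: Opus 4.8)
The plan is to verify the two defining properties of a partition—coverage and pairwise disjointness—after first unfolding the inductive definition of the induced trees into a concrete reachability statement. Call a node $\delta\in T$ with $|\delta|\geq 1$ \emph{chosen} if $\delta\in q(\delta^-)$, where $\delta^-$ denotes the parent of $\delta$ (its prefix of length $|\delta|-1$). First I would note that the index set is exactly the non-chosen nodes together with the root: since $q(\gamma)\subseteq\mathrm{succ}(\gamma)$, a node $\delta$ can only belong to $q(\delta^-)$, so the condition ``$\gamma_i\notin q(\gamma)$ for all $\gamma\in T$'' reduces to ``$\gamma_i$ is the root or $\gamma_i$ is not chosen''; in particular the root $\epsilon$ is always an index node, being the successor of nothing. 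The first step is then to prove, by unfolding the least-fixpoint definition of $T_\gamma(q)$, the characterization
\[
\gamma\delta\in\gamma T_\gamma(q)\quad\Longleftrightarrow\quad \text{every node }\mu\text{ with }\gamma\sqsubset\mu\sqsubseteq\gamma\delta\text{ is chosen.}
\]
This is immediate once spelled out: $\delta\in T_\gamma(q)$ holds iff there is a finite chain $\epsilon=\delta^0,\delta^1,\dots,\delta^m=\delta$ with $\delta^{k+1}\in q_\gamma(\delta^k)$, and translating through the defining equality $\gamma q_\gamma(\delta^k)=q(\gamma\delta^k)$ shows this means exactly that each $\gamma\delta^{k+1}$ is a chosen successor of $\gamma\delta^k$.

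For coverage, I would fix an arbitrary $\delta\in T$ and consider the prefixes of $\delta$ that are index nodes. The root is always one, so this collection is non-empty, and being a set of prefixes of $\delta$ it is finite; let $\gamma_i$ be its longest element. Then every node $\mu$ with $\gamma_i\sqsubset\mu\sqsubseteq\delta$ is chosen—otherwise such a $\mu$ would be a non-chosen, hence index, prefix of $\delta$ strictly longer than $\gamma_i$, contradicting maximality. By the characterization above, $\delta\in\gamma_i T_{\gamma_i}(q)$.

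For disjointness, suppose $\delta\in\gamma_i T_{\gamma_i}(q)\cap\gamma_j T_{\gamma_j}(q)$. Both $\gamma_i$ and $\gamma_j$ are prefixes of $\delta$, hence comparable; assume without loss of generality $\gamma_i\sqsubseteq\gamma_j$. If the inclusion were strict, then $\gamma_j$ would satisfy $\gamma_i\sqsubset\gamma_j\sqsubseteq\delta$, so by the characterization $\gamma_j$ would be chosen, i.e. $\gamma_j\in q(\gamma_j^-)$; but $\gamma_j$ is an index node distinct from the root, which means precisely that $\gamma_j$ is not chosen—a contradiction. Hence $\gamma_i=\gamma_j$, and since the family $(\gamma_i)_{i\in I}$ consists of distinct nodes, $i=j$.

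The only genuinely delicate point is the first step: making the passage from the inductive (least-fixpoint) definition of $T_\gamma(q)$ to the explicit ``reachable by chosen edges'' description rigorous, and keeping the root's special status straight, as it is an index node without being non-chosen in the literal parental sense. Once that reachability characterization is in hand, coverage and disjointness are short path arguments that rely only on the fact that the prefixes of a common node are linearly ordered.
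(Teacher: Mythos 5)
Your proof is correct. The paper states this as an \emph{observation} and supplies no proof at all, so there is nothing to compare against; your argument --- reducing membership in $\gamma T_\gamma(q)$ to ``all intermediate nodes are chosen,'' then taking the longest index-node prefix for coverage and using comparability of prefixes for disjointness --- is precisely the routine verification the author is implicitly leaving to the reader, and you handle the one delicate point (the root being an index node vacuously, and the passage from the least-fixpoint definition to the reachability description) correctly.
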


\section{Many players with linearly ordered preferences}\label{sec:mp-lop}

\noindent This section characterizes the linear preferences that always yield SPE in games with $\mathcal{D}_{\omega_1}$-measurable outcome functions: the families of preferences without infinite ascending chains and without the SPE killer, \textit{i.e.} the pattern $z <_a y <_a x\,\wedge\, x <_b z <_b y$ for some players $a$ and $b$ and outcomes $x,y,z$. The main difficulty is tackled by Lemma~\ref{lem:quasi-antagonist-closed-SPE} and corollary~\ref{cor:quasi-antagonist-open-union-closed-SPE} below. It consists in slightly generalizing an existing result~\cite{LP14} stating that two-player Borel games with antagonist preferences have SPE by considering preferences that are almost antagonist, but in addition there is an outcome $y$ that is the worst one for both players, and the set of plays with outcome $y$ is a closed set (union an open set). This is then generalized for a set in $\mathcal{D}_{\omega_1}$ by induction, and eventually to multi-player games without the SPE killer thanks to a combinatorial result.  

\begin{lemma}\label{lem:quasi-antagonist-closed-SPE}
Let a game involve two players $a$ and $b$, preferences $y <_a x_n <_a \dots <_a x_1$ and $y <_b x_1 <_b \dots <_b x_n$ for some $n$, such that all plays without pseudo-leaves have outcome $y$. The game has a global-Pareto subgame perfect equilibrium.
\end{lemma}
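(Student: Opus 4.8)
The plan is to attach to every node $\gamma$ a value $\nu(\gamma)\in\{x_1,\dots,x_n,y\}$ — the outcome the equilibrium will induce in $g_\gamma$ — and then to read off the profile from $\nu$. The geometry to exploit is that the preferences are antagonistic on $\{x_1,\dots,x_n\}$ while $y$ is the common bottom, and that the sets $v_\gamma^{-1}(\{x_1,\dots,x_i\})$ are open (unions of pseudo-leaf balls), whereas $v^{-1}(y)$ is closed-union-open (the pseudo-leaf-free plays, which go to $y$ by hypothesis, together with the $y$-pseudo-leaf balls). The intended characterisation of $\nu$ is by two Bellman-type identities, $\nu(\gamma)=\max_{\prec_a}\{\nu(\delta):\delta\in\mathrm{succ}(\gamma)\}$ when $d(\gamma)=a$ and symmetrically with $\prec_b$ when $d(\gamma)=b$, and the task is to show such a consistent $\nu$ exists over the infinite tree and is realised by an actual profile, adapting the determinacy construction behind the antagonistic two-player result of the Borel case.

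For the existence of $\nu$ I would run Gale--Stewart open determinacy on the antagonistic contest: for each threshold $i$ the target $\{x_1,\dots,x_i\}$ is open, so from every subgame either $a$ forces the outcome into it or $b$ forces it into the closed complement, which yields a determined antagonistic value on the $x_i$. The one point of care is that $y$ must not be treated as a free adversarial threat, since it is also the opponent's worst outcome. I would handle this by first proving that the ``avoid $y$'' behaviour is self-enforcing: because $y$ is the $\prec$-minimum for the owner of \emph{every} node, at any node from which some $x_i$ is reachable the owner strictly prefers steering toward an $x_i$-pseudo-leaf to conceding $y$; hence the genuine antagonistic value is computed over the non-$y$ plays, and $\nu(\gamma)=y$ can occur only where no $x_i$ is reachable.

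The hypothesis enters precisely at the Pareto step, through the implication $\nu(\gamma)=y\ \Rightarrow$ only $y$ occurs in $g_\gamma$. Indeed, the Bellman identities force $\nu(\delta)=y$ for every child $\delta$ once $\nu(\gamma)=y$, and iterating this no descendant of $\gamma$ with value $y$ can be an $x_i$-pseudo-leaf; therefore every play issued from $\gamma$ meets only $y$-pseudo-leaves or no pseudo-leaf at all, and in both cases has outcome $y$ — the latter exactly because pseudo-leaf-free plays yield $y$. Consequently the induced outcome at each subgame is some occurring $x_i$ whenever any $x_i$ occurs, and such an $x_i$ is Pareto-optimal among the occurring outcomes since the $x_i$ are pairwise Pareto-incomparable and each dominates $y$; this delivers the global-Pareto property. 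This is also the step that would fail for the SPE killer, where $y$ is not the common bottom.

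With $\nu$ in hand I would define $s$ by having the owner of each node move to a successor achieving $\nu(\gamma)$ for them, assemble the pieces over the infinite tree through the quasi-profile refinement set up before Observation~\ref{obs:tree-partition}, and verify subgame perfection by checking that no unilateral deviation beats the value $\nu$ in any subgame. The main obstacle I anticipate is exactly this assembly: producing a single globally consistent value and gluing the underlying optimal strategies so that the induced play realises $\nu$ in \emph{every} subgame simultaneously, not merely at the root. The attacker's forcing into an open $x$-target is local and easy to witness, but the defender must secure the value over an infinite horizon while never being coerced into $y$, and reconciling these two sides coherently across all subgames is where the argument genuinely extends, rather than merely invokes, the purely antagonistic construction.
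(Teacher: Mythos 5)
Your topological observations are correct, and the last step of your plan is sound: a profile that satisfies the Bellman identities for some value map $\nu$ \emph{and} whose induced play realises $\nu$ in every subgame is indeed a GP-SPE, since along any deviation by, say, player $a$ the value weakly decreases for $a$, and a deviation play that never meets a pseudo-leaf yields $y$, which is $a$'s worst outcome; the Pareto argument via ``$\nu(\gamma)=y$ only if only $y$ occurs in $g_\gamma$'' is also fine. The gap is the existence of such a consistent, realised $\nu$, which is the entire content of the lemma, and the one concrete tool you propose for it --- Gale--Stewart determinacy of the threshold games with targets $\{x_1,\dots,x_i\}$ --- does not deliver it, because the game is not antagonistic. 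In the win--lose game ``$a$ forces the outcome into $\{x_1,\dots,x_i\}$'', the opponent's optimal counter-strategy may consist of steering the play towards $y$, a threat $b$ would never carry out; your fix of computing ``the genuine antagonistic value over the non-$y$ plays'' is not well defined, since the non-$y$ plays do not form a subgame. Concretely, take $n=2$ and the alternating game in which at each of her nodes $a$ may stop for $x_2$ or continue, at each of his nodes $b$ may stop for $x_1$ or continue, and the infinite play yields $y$. Threshold determinacy gives $a$ the value $x_2$ (she forces $\{x_1,x_2\}$ but not $\{x_1\}$) and gives $b$ the value $x_1$: the two ``determined values'' are \emph{different outcomes}, and the game indeed has two distinct GP-SPE, with outcomes $x_1$ and $x_2$ respectively. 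So there is no canonical value to read off; what must be produced is one globally coherent selection across all subgames simultaneously, and your proposal explicitly defers exactly this (``the main obstacle I anticipate is exactly this assembly'') without supplying the argument.

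For comparison, the paper avoids value functions altogether. It argues by induction on $n$ and by a transfinite iteration of five self-enforcing quasi-profile refinements, exploiting the asymmetry of $x_n$ ($b$'s best, $a$'s worst among the $x_i$): $b$ commits to grabbing any adjacent $x_n$-pseudo-leaf, $a$ commits to ignoring $x_n$-pseudo-leaves wherever she has a better stopping option, $a$ settles for $x_n$ in subgames where it is her only stopping option, $y$-pseudo-leaves are ignored by both players, and subgames not involving $y$ are genuinely antagonistic and well-founded, hence handled by backward induction. Each promise is optimal regardless of all future choices, so the resulting pieces glue by Observation~\ref{obs:tree-partition}; once the transformations are exhausted, $x_n$ can be eliminated and the induction hypothesis on $n-1$ applies. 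If you wish to salvage the value-function route, you would have to prove the existence of a consistent realised $\nu$ from scratch, which in effect amounts to redoing this construction.
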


\begin{proof}
Let us consider only infinite games involving two players $a$ and $b$, preferences $y <_a x_n <_a \dots <_a x_1$ and $y <_b x_1 <_b \dots <_b x_n$ for some $n$. Let us call a game \emph{weak-stop} if every play without pseudo-leaves has outcome $y$, and \emph{strong-stop} if in addition every node that does not lie on a play with outcome $y$ has a prefix that is a pseudo-leaf. Note the following: in every weak-stop game the plays with outcome $y$ form a closed set; for every quasi profile $q$ for a weak-stop game $g$, the set $G(g,q)$ contains only weak-stop games; and modifying the outcome function of a weak-stop game such that it is constant on given subgames yields a weak-stop game. (But the same does not hold for strong-stop games.) Let us call a node of a strong-stop game an $a_k$ ($b_k$) node if it is owned by player $a$ ($b$), and if it is the parent of a $x_k$-pseudo-leaf. Let us call every $a_k$ ($b_k$) node a $a$-stop ($b$-stop) node, and furthermore let us call every $a$-stop or $b$-stop node a stop node.

Let us prove the claim by induction on $n$, which holds for $n = 0$ and $n = 1$, so let us assume that $1 < n$. Five transformations on games are defined below, and they are meant to be applied recursively to a weak-stop game.

\begin{enumerate} 

\item\label{trans:1} "Weak-stop towards strong-stop": Let $g$ be a weak-stop game on tree $T$. Let $\gamma$ be a node such that $g_\gamma$ involves more than one outcome but not $y$. By construction $g_\gamma$ is an antagonist game, and it amounts, when seeing a pseudo-leaf as a leaf, to a game without infinite plays, but possibly without uniform bound on the length of the plays. By~\cite{EO10} it has a GP-SPE $s_\gamma$ nonetheless, which induces some $x_k$. Let us derive a weak-stop game $g'$ from $g$ by modification of the outcome function: for all $p\in [T]$ let $v'(p) := x_k$ if $\gamma \sqsubseteq p$ and $v'(p) := v(p)$ otherwise. Pasting $s_\gamma$ at node $\gamma$ on a GP-SPE $s'$ for $g'$ yields a GP-SPE $s$ for $g$. Formally $s(\gamma \delta) := s_\gamma(\delta)$ for all $\delta\in T_\gamma$ and $s(\delta) := s'(\delta)$ for all $\delta\in T$ such that $\gamma \not\sqsubseteq \delta$.

\item\label{trans:2} "Emptying the interior of $y$": Let $g$ be a strong-stop game on tree $T$ and let $\gamma$ be a $y$-pseudo-leaf. If $\gamma = \epsilon$ all profiles for $g$ are GP-SPE. If $\gamma$ is not the root of $g$ let us define a quasi profile $q$ for $g$ by letting the owner of the parent of $\gamma$ ignore $\gamma$. Formally, $q(\delta) := \mathrm{succ}(T,\delta)\backslash \{\gamma\}$ for all $\delta \in T$. Since $y$ is the worst outcome for both players, none will have an incentive to deviate from this promise, regardless of the future choices at the other nodes. $G(g,q)$ contains $g_\gamma$ and a  weak-stop game $g'$. Combining any profile $s_\gamma$ for $g_\gamma$ and a GP-SPE for $g'$ yields a GP-SPE for $g$.

\item\label{trans:3} "$b$ chooses $x_n$": Let $g$ be a strong-stop game on tree $T$, let $\gamma$ be a $b_n$ node and let $\delta\in \mathrm{succ}(T,\gamma)$ be an $x_n$-pseudo-leaf. Let us define a quasi profile $q$ for $g$ by letting $b$ choose $\delta$ at $\gamma$. Formally, $q(\gamma) := \{\delta\}$ and $q(\alpha) := \mathrm{succ}(T,\alpha)$ for all $\alpha\in T\backslash \{\gamma\}$. Since $x_n$ is $b$'s preferred outcome, she will have no incentive to deviate from this choice, regardless of the choices at the other nodes. So, finding a GP-SPE for every weak-stop game in $G(g,q)$ will complete the definition of a GP-SPE for $g$. 

\item\label{trans:4} "$a$ ignores $x_n$": Let $g$ be a strong-stop game. Let $\gamma$ be a node in $g$ such that $g_\gamma$ involves outcome $y$ but no $b$-stop nodes, and such that every subgame of $g_\gamma$ involving outcome $y$ has an $a_k$ node for some $k < n$. Let us define a quasi profile $q$ for $g$ by letting $a$ ignore all $x_n$-pseudo-leaves at all $a_n$ nodes below $\gamma$. The set $G(g,q)$ is made of games involving only outcome $x_n$ and of one $g'$ such that $g'_\gamma$ does not involve $x_n$. By induction hypothesis $g'_\gamma$ has an GP-SPE $s'_\gamma$, which induces some $x_k$. Let us define $g''$ by modification of the outcome function of $g$: for all $p\in \gamma[T_{\gamma}]$ let $v''(p) := x_k$ and for all $p\in [T]\backslash \gamma [T_{\gamma}]$ let $v''(p) := v(p)$. Pasting $s'_\gamma$ on a GP-SPE $s''$ for $g''$ yields a GP-SPE for $g$.

\item\label{trans:5} "$a$ chooses $x_n$": Let $g$ be a strong-stop game. Let $\gamma$ be such that $g_\gamma$ involves outcome $y$ and every subgame of $g_\gamma$ involving $y$ has some $a_n$ nodes but no $a_k$ node for all $k < n$. To build for $g_\gamma$ a GP-SPE $s_\gamma$ inducing $x_n$ on all of its subprofiles, it suffices, first, to choose arbitrary profiles for the subgames rooted at the $x_n$-pseudo-leaves of $g_\gamma$, and second, to fix consistently paths from each node to an $x_n$-pseudo-leaf. This second step can be done by letting player $a$ choose an $x_n$-pseudo-leaf at some node, which defines a quasi profile $q$, and by repeating it recursively for the games in $G(g_\gamma,q)$. This $s_\gamma$ is a GP-SPE because every subprofile induces outcome $x_n$, which is $b$'s preferred outcome, and the only alternative for $a$ in every subprofile is outcome $y$ since all $a$-stop nodes are $a_n$ nodes. Let us define $g'$ by modification of the outcome function of $g$: for all $p\in \gamma [T_{\gamma}]$ let $v'(p) := x_n$ and for all $p\in [T]\backslash \gamma [T_{\gamma}]$ let $v'(p) := v(p)$. Pasting $s_\gamma$ on a GP-SPE $s'$ for $g'$ yields a GP-SPE for $g$.
\end{enumerate}

Given a weak-stop game $g$, let us apply to it the five transformations above, sequentially, non-deterministically whenever they are applicable, and until none of them is applicable, \textit{i.e.} possibly an ordinal number of times. This yields a set $G$ of strong-stop games (otherwise Transformation~\ref{trans:1} could be applied) whose subgames that involve outcome $y$ all have stop nodes (otherwise Transformation~\ref{trans:2} could be applied), without $b_n$ nodes (otherwise Transformation~\ref{trans:3} could be applied), such that every subgame that involves $y$ but no $b$-stop nodes has a subgame without $a_k$ nodes for all $k < n$ (otherwise Transformation~\ref{trans:4} could be applied), and such that every subgame $h'$ of every game in $G$ has the following property (otherwise Transformation~\ref{trans:5} could be applied): if every subgame of $h'$ has a $a$-stop node, $h'$ has an $a_k$ node for some $k < n$.

Let $h'$ be a subgame of $h\in G$, and that involves $y$. If $h'$ has no $b$-stop nodes, combining the properties above shows that all of its subgames have $a$-stop nodes, so one of them has only $a_n$ nodes, contradiction, so every subgame of $h$ that involves $y$ has a $b$-stop node.

For every $h\in G$ let us define the quasi profile $q$ by letting $a$ ignore all the $x_n$-pseudo-leaves. $G(h,q)$ is made of games involving the outcome $x_n$ only and of one $h'$ void of $x_n$. Since every subgame of $h$ that involves $y$ has a $b$-stop node, it also holds for $h'$. By induction hypothesis, $h'$ has a GP-SPE, which is easily extended to a GP-SPE for $h$, thus completing the definition of a GP-SPE for the original $g$.
 \end{proof}

\begin{corollary}\label{cor:quasi-antagonist-open-union-closed-SPE}
Given a game $g$ with two players $a$ and $b$, a quasi-Borel measurable outcome function, and preferences $y <_a x_n <_a \dots <_a x_1$ and $y <_b x_1 <_b \dots <_b x_n$ for some $n$. If the plays with outcome $y$ form the union of an open set and a closed set, the game has a global-Pareto subgame perfect equilibrium.
\end{corollary}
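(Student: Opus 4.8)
The plan is to reduce the open-union-closed case to the closed case already established in Lemma~\ref{lem:quasi-antagonist-closed-SPE}, by first peeling off the open part of the $y$-region with a quasi-profile, and then collapsing the remaining $y$-free subgames to constants using the antagonistic-determinacy result of~\cite{LP14}. Write the set of $y$-plays as $U\cup C$ with $U$ open and $C$ closed, and express $U=\bigcup_{i\in I}\gamma_i[T_{\gamma_i}]$ for pairwise incomparable nodes $\gamma_i$ chosen \emph{minimal}, i.e. no proper prefix $\gamma\sqsubset\gamma_i$ has $\gamma[T_\gamma]\subseteq U$. Since each such ball lies in $v^{-1}(y)$, the subgame $g_{\gamma_i}$ has constant outcome $y$. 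I would define a quasi-profile $q$ forbidding every $\gamma_i$, namely $q(\delta):=\mathrm{succ}(T,\delta)\setminus\{\gamma_i\mid i\in I\}$, and check $q\in Q_g$: if every child of a node $\delta$ were forbidden then $\delta[T_\delta]\subseteq U$, contradicting the minimality of those children, which are proper extensions of $\delta$. By Observation~\ref{obs:tree-partition}, $G(g,q)$ then consists of the constant games $g_{\gamma_i}$ and one game $g'$, on a tree $T'$, no play of which meets any $\gamma_i$; hence $[T']\cap U=\emptyset$ and the $y$-plays of $g'$ are $C\cap[T']$, which is closed. As $y$ is the common worst outcome, neither player has an incentive to enter a forbidden ball, so gluing any (trivial) GP-SPE of each constant $g_{\gamma_i}$ to a GP-SPE of $g'$ yields a GP-SPE of $g$; it remains to treat $g'$.

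For $g'$ the $y$-plays are closed, so the non-$y$ plays form an \emph{open} set, which I write as $\bigcup_{j\in J}\delta_j[T'_{\delta_j}]$ over pairwise incomparable minimal nodes. Each $g'_{\delta_j}$ avoids $y$, hence is a two-player \emph{antagonistic} game carrying the reversed linear orders on $\{x_1,\dots,x_n\}$ and a quasi-Borel outcome function; by~\cite{LP14} it has an SPE $s_{\delta_j}$, which is automatically global-Pareto because in an antagonistic game all occurring outcomes are mutually Pareto-incomparable. Note that, unlike Transformation~\ref{trans:1} in the lemma (where weak-stop forces well-foundedness and allows~\cite{EO10}), here I must invoke~\cite{LP14} since these $y$-free subgames need not be well-founded. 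Let $x_{k_j}$ be the outcome induced by $s_{\delta_j}$, and define $g''$ from $g'$ by setting $v''$ constantly equal to $x_{k_j}$ on each $\delta_j[T'_{\delta_j}]$ and unchanged elsewhere. Then every $\delta_j$ sits above a pseudo-leaf of $g''$, while a play of $g''$ meeting no $\delta_j$ stays in $C\cap[T']$ and has outcome $y$; equivalently every non-$y$ play has a pseudo-leaf, so $g''$ is weak-stop. Lemma~\ref{lem:quasi-antagonist-closed-SPE} then supplies a GP-SPE $s''$ for $g''$.

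Finally I would paste $s_{\delta_j}$ onto $s''$ at each $\delta_j$, exactly as in Transformations~\ref{trans:1} and~\ref{trans:4}, obtaining a profile for $g'$. Subgame-perfection is checked node by node, using the GP-SPE of $g'_{\delta_j}$ below each $\delta_j$ and the subgame-perfection of $s''$ elsewhere. I expect the one genuinely delicate point to be reconciling \emph{global-Pareto} between the true game $g'$ and the collapsed game $g''$, whose sets of occurring outcomes differ: the induced outcome is certified Pareto-optimal only among the outcomes occurring in $g''$. This is where antagonism rescues the argument, since every $x_k$ is Pareto-optimal among \emph{all} outcomes (the $x$'s being mutually incomparable and $y$ lying below them all), and a node is all-$y$ in $g''$ precisely when it is all-$y$ in $g'$; hence the paste is Pareto-optimal in $g'$ iff it is in $g''$. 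The remaining obstacles are routine: well-definedness of the two quasi-profiles, which rests entirely on the minimality of the chosen balls, and the structural observation—the real crux of the reduction—that $U$ must be removed \emph{before} collapsing, because the non-$y$ region of the original game is in general not open and only becomes the open complement of a closed set once $U$ is gone. Combining the GP-SPE obtained for $g'$ with the trivial ones for the $g_{\gamma_i}$ completes the GP-SPE for $g$.
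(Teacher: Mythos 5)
Your proposal is correct and follows essentially the same route as the paper: strip off the open part of the $y$-set to reduce to the closed case, decompose the non-$y$ plays of the remaining game into clopen antagonistic subgames solved via~\cite{LP14}, collapse those to constant outcomes, and conclude with Lemma~\ref{lem:quasi-antagonist-closed-SPE}. You are merely more explicit than the paper about the quasi-profile mechanism for the removal and about verifying the global-Pareto property, which is welcome but not a different argument.
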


\begin{proof}
Let the plays with outcome $y$ be the union $Y = Y_o \cup Y_c$ of  an open set and a closed set. Wlog $Y_o$ and $Y_c$ are disjoint. Let us derive $g'$ from $g$ by removing the plays in $Y_o$. So the plays of $g'$ that do not yield outcome $y$ form an open set, \textit{i.e.} a disjoint union of clopen balls with defined by the prefixes $(\gamma_i)_{i\in I}$. Every game $g_{\gamma_i}$ is antagonist and quasi-Borel, so it has an SPE $s_i$ by ~\cite{LP14}. Let us define $g''$ by modification of the outcome function of $g'$: every play going through $\gamma_i$ yields the outcome induced by $s_i$. This $g''$ has a GP-SPE $s''$ by Lemma~\ref{lem:quasi-antagonist-closed-SPE}, and together with the $s_i$ it can be used to build a GP-SPE for $g$.
\end{proof}

\noindent Let us extend Corollary~\ref{cor:quasi-antagonist-open-union-closed-SPE} from open union closed to the difference hierarchy.

\begin{lemma}\label{lem:quasi-antagonist-Delta02-SPE}
Let be a game on a tree $T$, with two players $a$ and $b$ and preferences $y <_a x_n <_a \dots <_a x_1$ and   $y <_b x_1 <_b \dots <_b x_n$, and let us assume that each set of plays with outcome $x_i$ is quasi-Borel and that $Y$ the set of plays with outcome $y$ is in $\mathcal{D}_{\omega_1}([T])$. Then the game has a global-Pareto subgame perfect equilibrium.
\end{lemma}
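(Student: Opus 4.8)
The plan is to induct on the structure of the set $Y$ in $\mathcal{D}([T])$ as given by Definition~\ref{def:qdh}, that is, on the (well-founded) height of a derivation of $Y$ from the three clauses: open sets, localized unions, and complements. At each step I would reduce the game to one or several games whose $y$-set has strictly smaller complexity, solve those by the induction hypothesis, and glue the resulting equilibria by the same ``collapse a subgame to its induced outcome and paste'' technique already used in the transformations of Lemma~\ref{lem:quasi-antagonist-closed-SPE}: whenever a subgame $g_{\gamma_i}$ admits a GP-SPE inducing an outcome $o_i$, I replace $g_{\gamma_i}$ by the constant outcome $o_i$, solve the resulting quotient game, and paste. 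The base case is $Y$ open, which is immediately covered by Corollary~\ref{cor:quasi-antagonist-open-union-closed-SPE}, an open set being the union of itself with the empty closed set.

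For the localized-union clause, $Y = \bigcup_{i\in I} D_i$ with the balls $\gamma_i[T_{\gamma_i}]$ pairwise disjoint and $D_i \subseteq \gamma_i[T_{\gamma_i}]$ of smaller complexity, I would first note that no $y$-play lies outside $\bigcup_i \gamma_i[T_{\gamma_i}]$, and that inside ball $i$ the subgame $g_{\gamma_i}$ has $y$-set $D_i$, to which the induction hypothesis applies (its $x$-sets stay quasi-Borel after restriction). This yields a GP-SPE $s_i$ inducing some $o_i$; collapsing each ball to the constant $o_i$ produces a game whose $y$-set is the union of the clopen balls with $o_i = y$, hence open, so Corollary~\ref{cor:quasi-antagonist-open-union-closed-SPE} finishes it and I paste the $s_i$.

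The main obstacle is the complement clause, $Y = [T]\setminus Y'$ with $Y'\in\mathcal{D}([T])$: here $Y'$ is the set of non-$y$ plays, so the induction hypothesis, which controls the complexity of the $y$-set and not of its complement, does not apply to $Y'$ directly. I would break this into cases according to how $Y'$ is built. If $Y'$ is open then $Y$ is closed and Lemma~\ref{lem:quasi-antagonist-closed-SPE} applies; if $Y'$ is itself a complement then $Y$ has strictly smaller complexity and the induction hypothesis applies. The essential case is $Y' = \bigcup_i D_i$ a localized union over disjoint balls $\gamma_i[T_{\gamma_i}]$: then every play outside these balls has outcome $y$ (a \emph{closed} set of $y$-plays), while inside ball $i$ the $y$-set is $\gamma_i[T_{\gamma_i}]\setminus D_i = \gamma_i[T_{\gamma_i}] \cap ([T]\setminus D_i)$, which by Observation~\ref{obs:qdh-inter-ball} lies in $\mathcal{D}$ at strictly smaller complexity than $Y$. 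Solving each ball by the induction hypothesis and collapsing it to its induced outcome $o_i$, I obtain a game whose $y$-plays form a closed set (outside the balls) together with the clopen balls where $o_i = y$, that is, a closed set union an open set. This is exactly the hypothesis of Corollary~\ref{cor:quasi-antagonist-open-union-closed-SPE}, which is why the corollary was stated for open-union-closed rather than merely closed; applying it and pasting the $s_i$ completes the induction.

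I expect the remaining difficulties to be bookkeeping rather than conceptual: checking that the derivation height strictly decreases in every recursive call, in particular that $\gamma_i[T_{\gamma_i}]\setminus D_i$ is genuinely simpler than $[T]\setminus\bigcup_i D_i$ (it is, since its rank is at most that of $Y'$, which is strictly below that of $Y = [T]\setminus Y'$); and verifying that pasting a GP-SPE of a collapsed subgame into a GP-SPE of the quotient game is again a GP-SPE, global-Pareto optimality included at every node, exactly as argued in the five transformations of Lemma~\ref{lem:quasi-antagonist-closed-SPE}.
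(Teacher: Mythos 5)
Your proposal follows essentially the same route as the paper: transfinite induction on the construction of $Y$ as a quasi-difference set, with the localized-union case collapsing to an open $y$-set and the complement-of-a-localized-union case collapsing to a closed-union-open $y$-set, both discharged by Corollary~\ref{cor:quasi-antagonist-open-union-closed-SPE}. One small correction: in your sub-case ``$Y'$ open, so $Y$ closed'' you should invoke Corollary~\ref{cor:quasi-antagonist-open-union-closed-SPE} rather than Lemma~\ref{lem:quasi-antagonist-closed-SPE} directly, since a closed $y$-set does not by itself guarantee that every play avoiding all pseudo-leaves has outcome $y$ (the open complement may be covered by balls containing several distinct $x_i$); the corollary's detour through the antagonistic subgames of \cite{LP14} is what bridges that gap.
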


\begin{proof}
By transfinite induction on the level of $Y$ in the difference hierarchy. The base case where $Y$ is open or closed is solved by Corollary~\ref{cor:quasi-antagonist-open-union-closed-SPE}.  

For the inductive case, let us make a case disjunction depending on the last step of the construction of $Y$. For the union case, $Y = \cup_{i\in \mathbb{N}} \gamma_iY_i$ for some $Y_i$ that have lower levels than $Y$ in the difference hierarchy, and where the $\gamma_i$ are not prefixes of one another. By induction hypothesis each $g_{\gamma_i}$ has a GP-SPE $s_i$ inducing either $y$ or an outcome $x_{k(i)}$. Let us start the construction of a profile $s$ for $g$ by fixing the $s_i$ as the respective subprofiles for the $g\mid_{\gamma_iT_{\gamma_i}}$. Let us define $g'$ by modification of the outcome function of $g$: let each play going through $\gamma_i$ yield the outcome induced by $s_i$. This is a quasi-Borel game and the plays with outcome $y$ form an open set, so it has a GP-SPE $s'$ by Corollary~\ref{cor:quasi-antagonist-open-union-closed-SPE}, which we use to complete the definition of $s$. It is easy to check that $s$ is a GP-SPE for $g$.

For the complementation case, $Y = [T]\backslash ([T]\backslash Y)$, where $[T]\backslash Y$ is equal to $\cup_{i\in \mathbb{N}} \gamma_iX_i$ for some $X_i$ that have lower levels than $[T]\backslash Y$ (and $Y$) in the difference hierarchy, and where the $\gamma_i$ are not prefixes of one another. Since all $Y \cap [\gamma_iT_{\gamma_i}] = [\gamma_iT_{\gamma_i}]\backslash \gamma_iX_i$ have lower levels than $Y$ in the difference hierarchy, by induction hypothesis each $g_{\gamma_i}$ has a GP-SPE $s_i$ inducing either $y$ or an outcome $x_{k(i)}$. Let us start the construction of a profile $s$ for $g$ by fixing the $s_i$ as the respective subprofiles for the $g\mid_{\gamma_iT_{\gamma_i}}$. Let us define $g'$ by modification of the outcome function of $g$: let each play going through $\gamma_i$ yield the outcome induced by $s_i$. This is a quasi-Borel game and the plays with outcome $y$ form the union of an open set and the closed set $[T]\backslash \cup_{i\in I} [\gamma_iT_{\gamma_i}]$, so it has a GP-SPE $s'$ by Corollary~\ref{cor:quasi-antagonist-open-union-closed-SPE}, which we use to complete the definition of $s$. It is easy to check that $s$ is a GP-SPE for $g$.
\end{proof}

\noindent The combinatorial Lemma~\ref{lem:forbid-pattern} below shows that the "local" absence of the SPE-killer amounts to a very simple "global" structure.

\begin{lemma}\label{lem:forbid-pattern}
Let $A$ be a non-empty set and for all $a\in A$ let $<_a$ be a strict linear order over some non-empty set $O$. The following assertions are equivalent.
\begin{enumerate}
\item\label{eq:pattern} $\forall a,b\in A,\forall x,y,z\in O,\,\neg(z<_ay<_ax\,\wedge\,x<_bz<_by)$.
\item\label{eq:partition} There exists a partition $\{O_i\}_{i\in I}$ of $O$ and a linear order $<$ over $I$ such that:
\begin{enumerate}
\item\label{cond:interval} $i<j$ implies $x<_ay$ for all $a\in A$ and $x\in O_i$ and $y\in O_j$.
\item\label{cond:inverse} $<_b\mid_{O_i}=<_a\mid_{O_i}$ or $<_b\mid_{O_i}=<_a\mid_{O_i}^{-1}$ for all $a,b\in A$.
\end{enumerate}
\end{enumerate}
If $1.$ and $2.$ hold, we may also assume that $<_b\mid_{O_i}=<_a\mid_{O_i}^{-1}$ is witnessed for all $i\in I$. Also, $O_i$ is always a $<_a$-interval for all $(i,a)\in I \times A$, as implied by \ref{cond:interval}.
\end{lemma}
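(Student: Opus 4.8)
The plan is to prove the two implications separately, with \ref{eq:partition}$\Rightarrow$\ref{eq:pattern} being routine and \ref{eq:pattern}$\Rightarrow$\ref{eq:partition} carrying all the weight. For the easy direction I would argue contrapositively: assume the killer pattern $z<_ay<_ax$ and $x<_bz<_by$ occurs. Then $a$ and $b$ disagree on the pair $\{x,y\}$ and on the pair $\{x,z\}$ but agree on $\{y,z\}$. Condition \ref{cond:interval} forces all players to agree on any pair straddling two distinct blocks, so the two disagreements put $x,y$ in one block and $x,z$ in one block, whence $x,y,z$ lie in a single $O_i$. Condition \ref{cond:inverse} then says $<_b\mid_{O_i}$ is either $<_a\mid_{O_i}$ or its reverse; the first makes $a,b$ agree on the whole triple and the second makes their triples exact reverses, and neither matches the cyclic mismatch of the pattern, a contradiction.

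For the forward implication the linchpin I would establish first is a dictionary for triples: on any $3$-element set, two players realize the killer pattern precisely when their restricted orders are \emph{distinct cyclic rotations of one another} (equivalently, agree on exactly one of the three pairs). Two distinct linear orders on a $3$-set otherwise are exact reverses (agree on no pair) or ``adjacent transpositions'' (agree on exactly two pairs). Hence \ref{eq:pattern} is equivalent to the clean statement: no two players restrict to distinct orders of a common cyclic class on any triple. I would then introduce the \emph{disagreement relation} $x\bowtie y$, meaning some two players order $x$ and $y$ oppositely, set $R:=\bowtie\cup\{(o,o)\mid o\in O\}$, and propose the $R$-classes as the blocks $O_i$.

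The spine of the argument is then three steps. First, $R$ is transitive, hence an equivalence relation: supposing $x\bowtie y\bowtie z$ with $x,y,z$ distinct but all players agreeing on $\{x,z\}$, only three orders on the triple are compatible with the fixed orientation of $\{x,z\}$, and the disagreement on $\{x,y\}$ together with that on $\{y,z\}$ forces two players into two distinct orders of a common cyclic class, i.e.\ a pattern, contradiction. Second, each $R$-class is a $<_c$-interval for every player $c$: given $x\bowtie z$ and $x<_cw<_cz$, take a player $b$ with $z<_bx$ and split on where $w$ sits in $<_b$ to get $w\bowtie x$ or $w\bowtie z$. Third, combining these, across classes all players agree and each class is an interval, so defining $O_i<O_j$ by the common inter-block order yields a genuine linear order on $I$ and condition \ref{cond:interval}.

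The hard part will be condition \ref{cond:inverse}. Via a small sub-lemma — two linear orders that are equal-or-reverse on every triple are globally equal-or-reverse, proved by propagating the agree/disagree status of pairs through shared elements — it suffices to rule out the adjacent-transposition case \emph{inside} a block, i.e.\ to show that on every triple $\{x,y,z\}\subseteq O_i$ any two players are equal or exact reverses. Here I expect the real difficulty: if $c$ and $b$ were an adjacent transposition on such a triple (say both put $x$ below $y,z$ but disagree on $\{y,z\}$), then, since any two distinct elements of one $R$-class are directly $\bowtie$-related by transitivity, there must exist a player reversing $\{x,y\}$; a short case check shows every such player restricts to a cyclic rotation of $c$ or of $b$ on the triple, producing the forbidden pattern and the desired contradiction. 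Finally, for the closing addendum I would observe that every non-singleton $R$-class contains a genuine $\bowtie$ pair, on which (by \ref{cond:inverse}) the disagreeing player is the global reverse of the reference, so reversal is witnessed; singleton blocks witness it trivially, so the extra requirement holds after at most a trivial adjustment.
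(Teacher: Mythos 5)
Your proof is correct, and its architecture coincides with the paper's: the same disagreement relation (the paper's $x\sim y :\Leftrightarrow \exists a,b,\ x\leq_a y\leq_b x$ is exactly your $R$), the same sequence of claims (equivalence, classes are $<_a$-intervals, the induced order on classes, equal-or-reverse within a class, and the witnessing remark via a disagreeing pair in any non-singleton class). The only real difference is the local combinatorial tool: the paper works throughout with a single ``betweenness transfer'' claim (if $x<_ay$ and $y<_bx$ then $x<_az<_ay$ iff $y<_bz<_bx$, and its refinement for $z\sim y$), applying it twice to get condition~\ref{cond:inverse} directly, whereas you use the triple dictionary (killer pattern $\Leftrightarrow$ agreement on exactly one of the three pairs $\Leftrightarrow$ distinct cyclic rotations) together with an extra local-to-global sub-lemma reducing \ref{cond:inverse} to triples. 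Both tools encode the same fact about pairs of linear orders on a $3$-set; your dictionary makes the case analyses (transitivity, exclusion of adjacent transpositions) arguably more systematic at the cost of one additional reduction step that the paper does not need.
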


\begin{proof}
\ref{eq:partition}$\,\Rightarrow\,$\ref{eq:pattern} is straightforward, so let us assume \ref{eq:pattern}. Let $x\sim y$ stand for $\exists a,b\in A,\,x\leq_a y\leq_b x$, which defines a reflexive and symmetric relation, and note that due to the SPE killer $z<_ay<_ax\,\wedge\,x<_bz<_by$ the following holds: if $x<_ay$ and $y<_bx$, then $x<_a z<_a y$ iff $y<_b z<_b x$. To show that $\sim$ is transitive too, let us assume that $x\sim y\sim z$. If $x,y,z$ are not pairwise distinct, $x\sim z$ follows directly, so let us assume that they are pairwise distinct, so by assumption there exist $a,b,c,d\in A$ such that $y<_a x<_b y<_c z<_d y$. To show that $x\sim z$ there are three cases depending on where $z$ lies with respect to $y<_ax$, all cases invoking the (above-mentioned) forbidden-pattern argument: if $y<_a z<_a x$ then $x<_b z<_b y$, and $x\sim z$ follows; if $y<_a x<_a z$ then $z<_d x<_d y$ and subsequently $y<_c x<_c z$, by invoking twice the forbidden-pattern argument, and $x\sim z$ follows; third case, let us assume that $z<_a y<_a x$. If $x<_b z$ then $x\sim z$ follows, and if $z<_b x$ then $z<_b x<_by$, so $y<_c x<_c z$, and $x\sim z$ follows. Therefore $\sim$ is an equivalence relation; let $\{O_i\}_{i\in I}$ be the corresponding partition of $O$. 

Now let us show that the  $\sim$-classes are $<_a$-intervals for all $a$, so let $x\sim y$ and $x<_a z<_a y$. By definition of $\sim$, there exists $b$ such that $y<_b x$, in which case $y<_b z<_bx$ by the forbidden-pattern argument, so $x\sim z$ by definition. 

Let $x\in O_i$ and $y\in O_j$ be such that $x<_ay$. Since $O_i$ and $O_j$ are intervals, $x'<_ay'$ for all $x'\in O_i$ and $y'\in O_j$. Since $\neg(x'\sim y')$ by assumption, $x'<_by'$ for all $b\in A$, by definition of $\sim$. In this case defining $i<j$ meets the requirements. 

Before proving \ref{cond:inverse} let us prove that if $x<_ay$ and $y<_bx$ and $z\sim y$, then $z<_ay$ iff $y<_bz$: this is trivial if $z$ equals $x$ or $y$, so let us assume that $x\neq z\neq y$, and also that $z<_ay$. If $z<_by$, then $y<_cz$ for some $c$ since $z\sim y$, and wherever $x$ may lie with respect to $y<_cz$, it always yields a SPE killer using $<_a$ or $<_b$, so $y<_bz$. The converse is similar, it follows actually from the application of this partial result using $<_b^{-1}$ and $<_a^{-1}$ instead of $<_a$ and $<_b$.

Now assume that $<_b\mid_{O_i}\neq<_a\mid_{O_i}$ for some $O_i$, so $x<_ay$ and $y<_bx$ for some $x,y\in O_i$. Let $z,t\in O_i$. By the claim just above $z<_ay$ iff $y<_bz$, so by the same claim again $z<_at$ iff $t<_bz$, which shows that $<_b\mid_{O_i}=<_a\mid_{O_i}^{-1}$. This proves the equivalence.

Finally, let us assume that the assertions hold. By definition of $\sim$, if $O_i$ is not a singleton, $x<_ay$ and $y<_bx$ for some $a,b\in A$ and $x,y\in O_i$, so $<_b\mid_{O_i}=<_a\mid_{O_i}^{-1}$ is witnessed.
\end{proof}

\noindent Theorem~\ref{thm:diff-hier-GP-SPE} extends Lemma~\ref{lem:quasi-antagonist-Delta02-SPE} to many players and more complex preferences.

\begin{theorem}\label{thm:diff-hier-GP-SPE}
Let $g$ be a quasi-Borel game with players in $A$, outcomes in $O$, and linear preferences $<_a$ for all $a\in A$. Let us assume that the inverses of the $<_a$ are well-ordered, and that there exists a partition $\{O_i\}_{i\in I}$ of $O$ and a linear order $<$ of $I$ such that:
\begin{itemize}
\item $i<j$ implies $x<_ay$ for all $a\in A$ and $x\in O_i$ and $y\in O_j$.
\item $<_b\mid_{O_i}=<_a\mid_{O_i}$ or $<_b\mid_{O_i}=<_a\mid_{O_i}^{-1}$ for all $a,b\in A$ and $i\in I$.
\end{itemize}
Let us further assume that for all $i\in I$ the plays with outcome in $\cup_{j < i} O_j$ form a $\mathcal{D}_{\omega_1}$ set. Then $g$ has a global-Pareto subgame perfect equilibrium.
\end{theorem}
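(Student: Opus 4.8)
The plan is to peel the intervals $O_i$ off from the top and reduce each peeling step to the two-player Lemma~\ref{lem:quasi-antagonist-Delta02-SPE}, recursing on the remaining intervals. First I would record two structural facts coming from the well-ordering of the inverses of the $<_a$. An infinite $<$-ascending chain in $I$ would lift (by picking one outcome per interval) to an infinite $<_a$-ascending chain in $O$, so $(I,<)$ is reverse-well-ordered and every non-empty subset of $I$ has a $<$-greatest element. Hence for any set $O'$ of occurring outcomes there is a top interval $O_{i_0}$ meeting $O'$, and by the monotonicity assumption $i<j\Rightarrow x<_a y$, every outcome of a lower interval is dominated, for all players at once, by every outcome of $O_{i_0}$. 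Thus the Pareto-optimal elements of $O'$ all lie in $O'\cap O_{i_0}$, and a global-Pareto equilibrium must, at every subgame, induce an outcome lying in the top interval occurring there (and Pareto-optimal within it). This is the target shape of the equilibrium I want to build.

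Second, I would observe that each interval $O_i$ on which two players genuinely disagree is necessarily \emph{finite}: if $<_b\mid_{O_i}=<_a\mid_{O_i}^{-1}$, then the well-ordering of the inverses of both $<_a$ and $<_b$ forces both $<_a\mid_{O_i}$ and its reverse to be well-orders, and a linear order that is well-ordered in both directions is finite. So a top interval $O_{i^*}$ is either finite with a two-camp antagonistic structure, or an interval on which all players agree (possibly infinite common-interest). I would then run a transfinite induction on the reverse order type of $I$. Let $i^*$ be the $<$-greatest index and form a reduced game on the window $O_{i^*}\cup\{y\}$, where $y$ denotes ``the play falls into a lower interval'', i.e. has outcome in $\cup_{j<i^*}O_j$; by hypothesis this set is in $\mathcal{D}_{\omega_1}$, and the remaining outcome sets stay quasi-Borel. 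In the antagonistic case I would split the players into the two camps of $O_{i^*}$, assigning a node to virtual player $\alpha$ (resp.\ $\beta$) when its real owner orders $O_{i^*}$ like the first (resp.\ second) orientation, and give $\alpha,\beta$ the antagonistic preferences over the finitely many outcomes of $O_{i^*}$ with $y$ worst for both. Since $y$ is worse than every element of $O_{i^*}$ for everyone, each real player's incentives over this window coincide with its camp, so the reduced game is faithful and Lemma~\ref{lem:quasi-antagonist-Delta02-SPE} supplies a global-Pareto SPE. In the common-interest case the reduced game has a single effective player, and a global-Pareto SPE is obtained by steering, at each node, towards the $<$-greatest outcome occurring in that subgame (which exists by well-ordering of the inverses).

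Either way the reduced equilibrium induces, at each subgame, an element of $O_{i^*}$ whenever $O_{i^*}$ occurs there (since $y$ is Pareto-dominated by every element of $O_{i^*}$), and $y$ otherwise. The subgames where it induces $y$ therefore contain no outcome of $O_{i^*}$, so they are games over the strictly lower intervals $\{O_j\}_{j<i^*}$, which still satisfy all the hypotheses, and the induction hypothesis furnishes global-Pareto SPE there, which I would glue onto the reduced equilibrium. The main obstacle is exactly verifying that this glued profile is a genuine SPE of the original multi-player game and is global-Pareto at every subgame. The delicate point is ruling out profitable deviations crossing between the ``$O_{i^*}$-region'' and the ``fall-below region'': a real player never gains by dropping from $O_{i^*}$ into a lower interval (everyone prefers $O_{i^*}$), and conversely no player can climb from a fall-below subgame back into $O_{i^*}$, for if it could then $O_{i^*}$ would \emph{occur} in that subgame and the global-Pareto property of the reduced equilibrium would already have forced an $O_{i^*}$ outcome there rather than $y$. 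It is this use of global-Pareto, rather than bare Nash optimality, that makes the transfinite recursion close, and getting this bookkeeping exactly right is the crux of the argument.
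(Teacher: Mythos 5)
Your proposal matches the paper's proof in all essentials: identify the $<$-greatest interval $O_{i^*}$ occurring in the (sub)game, collapse $\cup_{j<i^*}O_j$ into a single fresh outcome $y$ that is worst for everybody, observe that the two-camp structure on the finite antagonistic $O_{i^*}$ makes the reduced game effectively two-player so that Lemma~\ref{lem:quasi-antagonist-Delta02-SPE} applies, and use global-Pareto-ness (not bare Nash optimality) to argue that the maximal subgames containing only lower outcomes are never profitably entered, so that equilibria for them can be constructed and glued on independently. Two divergences are worth noting. First, the paper does not treat a common-interest top interval as a separate case: it invokes Lemma~\ref{lem:forbid-pattern} to pass to the $\sim$-partition, on which antagonism is witnessed on every non-singleton class and hence every class is finite; your direct handling of a possibly infinite common-interest $O_{i^*}$ (follow a witnessing play for the greatest occurring outcome, recursively) is a legitimate alternative, and it sidesteps having to check that the refined partition still satisfies the $\mathcal{D}_{\omega_1}$ hypothesis on its lower sets. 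Second, your proposed foundation for the recursion does not work as stated: removing the greatest index from a reverse-well-ordered $I$ need not decrease its reverse order type (take $I$ of reverse order type $\omega$), so a transfinite induction on that order type does not close. The paper instead takes the profile to be the limit of the recursive procedure, and this limit is well defined because each $\gamma_k$ is a non-root node of the current game, so by Observation~\ref{obs:tree-partition} every node of $T$ receives its move after at most finitely many rounds and the assignment is never revised; with that replacement your argument goes through.
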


\begin{proof}
By Lemma~\ref{lem:forbid-pattern} let us further assume wlog that $<_b\mid_{O_j}=<_a\mid_{O_j}^{-1}$ is witnessed for all $j\in I$, so the $O_j$ are finite by well-ordering. Let us build a GP-SPE for $g$ as the limit of a recursive procedure: Let $O_i$ be such that some outcome of $O_i$ occurs in $g$ and such that for all $j > i$ no outcome from $O_j$ occurs in $g$. Let $(\gamma_k)_{k\in K}$ be the shortest nodes of $g$ such that the outcomes occuring in $g_{\gamma_k}$ are in $\cup_{j < i}O_j$ only. Let us define a quasi profile $q$ for $g$ by having the $\gamma_k$ ignored by their parents. $G(g,q)$ consists of the $g_{\gamma_k}$ and of a game $g'$. By Lemma~\ref{lem:quasi-antagonist-Delta02-SPE} there is a GP-SPE $s'$ for $g'$. (To see this, replace $\cup_{j < i} O_j$ with one single fresh outcome $y$, \textit{i.e.} $y\notin O$ and set $y <_a x$ for all $a\in A$ and $x\in O_i$.)  Combining $s'$ with GP-SPE for the $g_{\gamma_k}$ (obtained recursively) yields a GP-SPE for $g$, since the choices made in $g'$ hold regardless of the choices made in the $g_{\gamma_k}$.
\end{proof}

\begin{corollary}\label{cor:SPE-pattern-lin-eq}
Let $A$ and $O$ be non-empty finite sets (of players and of outcomes) and for all $a\in A$ let $<_a$ be a linear preference. The following are equivalent.
\begin{enumerate}
\item $\forall a,b\in A,\forall x,y,z\in O,\,\neg(z<_ay<_ax\,\wedge\,x<_bz<_by)$.
\item Every $\mathcal{D}_{\omega_1}$-Gale-Stewart game using $A$, $O$ and the $<_a$ has a GP-SPE.
\end{enumerate}

\begin{proof}
For $1.\Rightarrow 2.$ invoke Lemma~\ref{lem:forbid-pattern} and Theorem~\ref{thm:diff-hier-GP-SPE}, and prove $2.\Rightarrow 1.$ by contraposition with the following folklore example which is detailed, \textit{e.g.}, in~\cite{LP14}.

\begin{tikzpicture}[node distance=1cm]
  \node(s){start};
  \node(s1)[right of = s]{};
  \node(a1)[right of = s1]{a};
  \node(b1)[right of = a1]{b};
  \node(a2)[right of = b1]{a};
  \node(b2)[right of = a2]{b};
  \node(inf1)[right of = b2]{};
  \node(inf)[right of = inf1]{x};

   \node(y1)[below of = a1]{y};
   \node(z1)[below of = b1]{z};
   \node(y2)[below of = a2]{y};
   \node(z2)[below of = b2]{z};
    
  \draw [->] (s) -- (a1);
  \draw [->] (a1) -- (b1);
  \draw [->] (b1) -- (a2);
  \draw [->] (a2) -- (b2); 
  \draw [->] (a1) -- (y1);
  \draw [->] (b1) -- (z1);
  \draw [->] (a2) -- (y2); 
  \draw [->] (b2) -- (z2); 
  \draw [dashed] (b2) -- (inf);
  \end{tikzpicture}
\end{proof}
\end{corollary}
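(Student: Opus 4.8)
The plan is to prove the two implications separately, with $1.\Rightarrow 2.$ being a direct application of the two preparatory results and $2.\Rightarrow 1.$ being a contrapositive argument via the explicit counterexample displayed in the statement.

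For $1.\Rightarrow 2.$, I would start from the hypothesis that the SPE killer is locally absent and feed it into Lemma~\ref{lem:forbid-pattern}, whose assertion~\ref{eq:pattern} is exactly condition $1.$ This yields a partition $\{O_i\}_{i\in I}$ of $O$ together with a linear order $<$ on $I$ satisfying the interval condition~\ref{cond:interval} and the inverse condition~\ref{cond:inverse}, which are precisely the two bulleted hypotheses of Theorem~\ref{thm:diff-hier-GP-SPE}. It then remains to check the two remaining hypotheses of that theorem. First, the inverses of the $<_a$ must be well-ordered; since $O$ is finite, every linear order on $O$ is a well-order and so is its inverse, so this is immediate. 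Second, for each $i\in I$ the plays with outcome in $\cup_{j<i}O_j$ must form a $\mathcal{D}_{\omega_1}$ set: because the game is $\mathcal{D}_{\omega_1}$-measurable each single outcome preimage lies in $\mathcal{D}_{\omega_1}([T])$, and since $O$ is finite, $\cup_{j<i}O_j$ is a finite set of outcomes, so its preimage is a finite union of $\mathcal{D}_{\omega_1}$ sets, hence again in $\mathcal{D}_{\omega_1}([T])$ by closure of the difference hierarchy under finite unions. With all four hypotheses verified, Theorem~\ref{thm:diff-hier-GP-SPE} delivers a global-Pareto subgame perfect equilibrium, which is $2.$

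For $2.\Rightarrow 1.$, I would argue by contraposition: assuming $1.$ fails, there are players $a,b$ and outcomes $x,y,z$ with $z<_ay<_ax$ and $x<_bz<_by$, and I would exhibit the folklore game drawn in the statement, in which $a$ and $b$ alternately may stop (yielding $y$ and $z$ respectively) and the single never-stopping play yields $x$. This game uses only the players $A$, outcomes $O$ and preferences $<_a$, and it is a legitimate $\mathcal{D}_{\omega_1}$-Gale-Stewart game because the set of plays with outcome $y$ and the set with outcome $z$ are open (each is a union of basic clopen balls determined by a stopping node), while the set with outcome $x$ is the single never-stopping play, a closed set; all three are therefore in $\mathcal{D}_{\omega_1}$. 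I would then recall that this game has no subgame perfect equilibrium, hence no global-Pareto one, by the standard cascading-deviation argument: no matter what a putative SPE prescribes at a given node, the preference pattern forces the decision maker to prefer deviating given the anticipated continuation outcome, and no stable choice exists; this is the example detailed in~\cite{LP14}.

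The routine part is the forward direction, where the two prior results do all the work once the finiteness-driven hypotheses are checked. The main obstacle, and the only genuinely game-theoretic content of the backward direction, is establishing that the counterexample admits no SPE; I would lean on the cited analysis rather than reprove it, since the non-existence for exactly this preference pattern is precisely the \emph{SPE killer} phenomenon motivating the whole paper.
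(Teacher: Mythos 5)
Your proposal is correct and follows essentially the same route as the paper: Lemma~\ref{lem:forbid-pattern} plus Theorem~\ref{thm:diff-hier-GP-SPE} for $1.\Rightarrow 2.$ (with the finiteness of $O$ supplying well-ordering of the inverse preferences and $\mathcal{D}_{\omega_1}$-membership of the finite unions $\cup_{j<i}O_j$), and contraposition via the folklore stopping game, whose non-existence of SPE is delegated to~\cite{LP14}, for $2.\Rightarrow 1.$ The only difference is that you spell out the routine hypothesis checks that the paper leaves implicit.
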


\noindent Corollary~\ref{cor:SPE-pattern-lin-eq} and the results that lead to it considers linear preference only. Proposition~\ref{prop:swo-lin-ext-SPE-killer} below show that this restriction incurs a loss of generality, which is partly solved in Section~\ref{sec:tp-swop}.

\begin{proposition}\label{prop:swo-lin-ext-SPE-killer}
Let us define two binary relations by $z,t \prec_a x,y$ and $y \prec_b z \prec_b x \prec_b t$.
\begin{enumerate}
\item $\mathcal{D}_{\omega_1}$ infinite games with players $a$ and $b$ and preferences $\prec_a$ and $\prec_b$ have SPE. 
\item The SPE killer occurs in any strict linear extensions of $\prec_a$ and $\prec_b$.
\end{enumerate}
\begin{proof}
$1.$ follows Theorem~\ref{thm:2player-swo-pref-SPE}. For $2.$ let $\prec'_a$ be a linear extension of $\prec_a$. If $x \prec'_a y$ then $z \prec'_a x \prec'_a y$ and $y \prec_b z \prec_b x$. If $y \prec'_a x$ then $t \prec'_a y \prec'_a x$ and $y \prec_b x \prec_b t$.
\end{proof}
\end{proposition}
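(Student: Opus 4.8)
The plan is to treat the two parts separately: Part 2 is a short case analysis that I can carry out directly, whereas Part 1 reduces to the two-player strict-weak-order existence result announced for Section~\ref{sec:tp-swop} (Theorem~\ref{thm:2player-swo-pref-SPE}). The conceptual point I want to foreground is that $\prec_a$, being a strict weak order with only the two indifference classes $\{z,t\}$ and $\{x,y\}$, has no strictly ascending chain of length three; since the SPE killer $z_0<_A y_0<_A x_0\,\wedge\,x_0<_B z_0<_B y_0$ forces the player $A$ in the ascending role to have such a length-three chain, no killer can occur in $\prec_a,\prec_b$ themselves. This is exactly why Part 1 should go through, and it contrasts sharply with Part 2.

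For Part 2, I would first note that $\prec_b$ is already a strict linear order, so the only freedom in choosing linear extensions lies in $\prec'_a$, whose only undetermined comparisons are $x$ versus $y$ and $z$ versus $t$. I split on the comparison of $x$ and $y$. If $x\prec'_a y$, then since $z\prec_a x$ holds in the base order we obtain the ascending chain $z\prec'_a x\prec'_a y$ for player $a$, while $\prec_b$ gives $y\prec_b z\prec_b x$; this is the killer with $a$ in the ascending role, $b$ in the rotated role, and outcomes $(z,x,y)$. If instead $y\prec'_a x$, I use the symmetry of the killer under exchanging the two players (the defining quantification ranges over all ordered pairs): $\prec_b$ supplies the ascending chain $y\prec_b x\prec_b t$, while $t\prec_a y$ together with the case hypothesis gives the rotated chain $t\prec'_a y\prec'_a x$ for player $a$, which is the killer with the roles swapped and outcomes $(y,x,t)$. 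Both witnesses are independent of how $z$ and $t$ are ordered, so every linear extension of $\prec_a,\prec_b$ contains the killer.

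For Part 1, the plan is to invoke Theorem~\ref{thm:2player-swo-pref-SPE} and check that $\prec_a,\prec_b$ meet its hypotheses. The preferences are finite, hence free of infinite ascending chains, and by the observation above they do not exhibit the SPE killer: neither player can serve as the ascending player without player $a$ being forced into a length-three $\prec_a$-chain, which does not exist. Thus the absence-of-killer hypothesis of the two-player strict-weak-order theorem is satisfied and an SPE exists—indeed, matching the pattern of the rest of the paper, a global-Pareto one.

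The main obstacle is entirely in Part 1, and it is not combinatorial but rests on the strength of Theorem~\ref{thm:2player-swo-pref-SPE}: once that theorem is available the verification is immediate, but establishing SPE existence for genuinely non-linear (strict weak order) preferences cannot be reduced to the linear characterization of Corollary~\ref{cor:SPE-pattern-lin-eq}, precisely because Part 2 shows that every linearization introduces the killer. Read together, the two parts make a single point: the class of strict weak orders guaranteeing SPE is strictly larger than the class detectable by passing to linear extensions.
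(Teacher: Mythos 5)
Your proof is correct and takes essentially the same route as the paper: Part 2 is the identical two-case analysis on $x$ versus $y$ in $\prec'_a$ (with the same witnesses $(z,x,y)$ and $(y,x,t)$), and Part 1 is the same appeal to Theorem~\ref{thm:2player-swo-pref-SPE}, for which you additionally spell out the hypothesis check (no length-three $\prec_a$-chain, hence no killer) that the paper leaves implicit. The only slight overreach is your parenthetical claim of a \emph{global-Pareto} SPE in Part 1, which Theorem~\ref{thm:2player-swo-pref-SPE} does not assert in that form and which the proposition does not require.
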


\section{Two players with strict weak order preferences}\label{sec:tp-swop}

\noindent The preferences considered in Proposition~\ref{prop:swo-lin-ext-SPE-killer} are strict weak orders. Informally, strict weak orders are strict partial orders that can be seen as strict linear orders from afar, \textit{i.e.} up to an equivalence relation. Traditionally in game theory the outcomes are real-valued payoff functions $f,g: A\to\mathbb{R}$ and the preferences are defined by $f \prec_a g$ iff $f(a) < g(a)$. These preferences are not strict linear orders but they are strict weak orders, so the results from Section~\ref{sec:mp-lop} are worth generalizing. Strict weak orders are defined below.

\begin{definition}[Strict weak order]
A strict weak order is a strict partial order whose complement is transitive, \textit{i.e.} is satisfies $\neg(x \prec x)$ and $x \prec y \wedge y \prec z \Rightarrow x \prec z$ and $\neg(x \prec y) \wedge \neg(y \prec z) \Rightarrow \neg(x \prec z)$.
\end{definition}

\noindent Lemma~\ref{lem:forbid-pattern} above describes the structure of strict linear orders void of the SPE killer. A similar result for strict weak orders will be useful. Lemma~\ref{lem:swo-antagonistic} below is part to it, and the other part appears directly in the proof of Lemma~\ref{lem:2p-swo-spe}.

\begin{lemma}\label{lem:swo-antagonistic}
Let $\prec_a$ and $\prec_b$ be two strict weak orders over some finite $O$.
\begin{enumerate}
\item If $\prec_a$ and $\prec_b$ are void of the SPE killer, if there exists a $\prec_a$-non-extremal element and a $\prec_b$-non-extremal element, and if there is no partition $\{O_u,O_l\}$ of $O$ such that $\neg(x \prec_a y)$ and $\neg(x \prec_b y)$ for all $(x,y)\in O_u \times O_l$, then $ \prec_a \cap \prec_b = \emptyset$.

\item If $\prec_a \cap \prec_b = \emptyset$, there exists a linear extension $<$ of $\prec_a$ such that $\prec_b \subseteq <^{-1}$.
\end{enumerate}
\begin{proof}
\begin{enumerate}
\item Let $x$ be $\prec_b$-minimal among the $\prec_a$-maximal elements, and let $y$ be $\prec_a$-minimal among the $\prec_b$-maximal elements, so $x \neq y$ by the partition assumption. Towards a contradiction let us assume that, \textit{e.g.}, $x$ is not $\prec_b$-minimal, and let $z$ be  $\prec_b$-minimal. For all $t \prec_b x$, it follows that $t$ is not $\prec_a$-maximal by definition of $x$, and $\neg(y \prec_a t)$ by absence of the SPE killer. So $y$ is not $\prec_a$-minimal, otherwise $z$ is also $\prec_a$-minimal, thus contradicting the partition assumption. So likewise, for all $t \prec_a y$, it follows that $t$ is not $\prec_b$-maximal  by definition of $y$, and $\neg(x \prec_b t)$ by absence of the SPE killer. So the two-element partition induced by $\{t\in O\,\mid\, t\prec_a y\,\vee\, t\prec_b x\}$ contradicts the partition assumption. This shows that $x$ is $\prec_b$-minimal and $y$ is $\prec_a$-minimal. Towards a contradiction let us assume that $t (\prec_a\cap \prec_b) z$ for some $t,z\in O$. So $\{x,y\}\cap\{z,t\} = \emptyset$. By the partition assumption $z$ is not both $\prec_a$ and $\prec_b$-maximal, so, \textit{e.g.}, $z \prec_a x$, and $t$ is not both $\prec_a$ and $\prec_b$-minimal. By absence of the SPE killer $t$ is $\prec_b$-minimal, so $y \prec_a t$ by the partition assumption., and subsequently $z$ is $\prec_b$-maximal. By assumption there exists $\gamma$ that is neither $\prec_b$-maximal nor $\prec_b$-minimal. Wherever $\gamma$ lies wrt $\prec_a$, the SPE killer occurs.

\item By induction on the cardinality of $O$, which holds for $|O| = 0$. Let $x$ be $\prec_b$-minimal among the $\prec_a$-maximal elements, so $x$ is also $\prec_b$-minimal since $\prec_a \cap \prec_b = \emptyset$. By induction hypothesis let $<_x$ witness the claim for $\prec_a\mid_{O\backslash\{x\}}$ and $\prec_b\mid_{O\backslash\{x\}}$. The linear order $< \,:=\, <_x \cup \{(y,x)\,\mid\,y\in O\backslash\{x\}\}$ witnesses the claim.
\end{enumerate}
\end{proof}
\end{lemma}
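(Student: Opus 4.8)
The statement bundles two logically independent facts, and the plan is to dispatch the order-extension result (part~2) first, since it is the cleaner of the two, and then attack the combinatorial antagonism result (part~1) by contradiction.

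For part~2, I would induct on $|O|$, building the linear extension $<$ from the top downwards. The goal at each stage is to exhibit a single element $x$ that may legally be declared the current $<$-maximum: this requires $x$ to be $\prec_a$-maximal (so that placing $x$ on top never violates ``$<$ extends $\prec_a$'') and simultaneously $\prec_b$-minimal (so that, since $\prec_b \subseteq\, <^{-1}$ forbids anything from being $\prec_b$-below the top, no $u \prec_b x$ exists). The point is that $\prec_a \cap \prec_b = \emptyset$ collapses these two demands onto one element. Take $x$ to be $\prec_b$-minimal among the $\prec_a$-maximal elements. If some $w$ satisfied $w \prec_b x$, then $\neg(w \prec_a x)$ by antagonism and $\neg(x \prec_a w)$ by $\prec_a$-maximality, so $w$ and $x$ are $\prec_a$-incomparable; because $\prec_a$ is a strict weak order its incomparability is an equivalence, hence $w$ is $\prec_a$-maximal too, contradicting the choice of $x$. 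Thus $x$ is $\prec_b$-minimal outright. Deleting $x$, applying the induction hypothesis to $O\setminus\{x\}$, and re-inserting $x$ above everything via $<\, :=\, <_x \cup \{(y,x)\mid y\in O\setminus\{x\}\}$ finishes the argument, the two defining constraints being verified directly on the pairs involving $x$.

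Part~1 carries the real weight, and I would assume $\prec_a \cap \prec_b \neq \emptyset$ and derive a contradiction. The method is to fix two distinguished elements: let $x$ be $\prec_b$-minimal among the $\prec_a$-maximal elements and $y$ be $\prec_a$-minimal among the $\prec_b$-maximal elements. First, $x \neq y$: were $x=y$, it would be both $\prec_a$- and $\prec_b$-maximal, making $\{\{x\},\,O\setminus\{x\}\}$ a forbidden partition. The central intermediate claim is that $x$ is in fact $\prec_b$-minimal and $y$ is in fact $\prec_a$-minimal \emph{globally}; I expect to prove this by assuming the contrary, using the absence of the SPE killer to propagate the resulting incomparabilities, and then assembling a two-block partition driven by the set $\{t \mid t \prec_a y \vee t \prec_b x\}$, which the no-partition hypothesis rules out. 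Once $x$ and $y$ are doubly extremal, any witness $t\,(\prec_a\cap\prec_b)\,z$ of non-antagonism must avoid $\{x,y\}$ (otherwise it clashes with one of the four extremality facts), and the no-partition hypothesis then forces the agreeing pair to span the whole of $\prec_b$, i.e.\ $t$ is $\prec_b$-minimal and $z$ is $\prec_b$-maximal. Finally, the $\prec_b$-non-extremal element guaranteed by hypothesis, wherever it lies with respect to $\prec_a$, completes an instance of $z'\prec_a y'\prec_a x' \wedge x'\prec_b z'\prec_b y'$, the desired contradiction.

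The main obstacle is clearly the doubly-extremal claim: it is the one step that consumes all three hypotheses at once (no SPE killer, existence of $\prec_a$- and $\prec_b$-non-extremal elements, and the no-partition condition), and the case analysis on where a third element sits relative to the candidate extremals is delicate. I would organize it as a short chain of ``forbidden-pattern'' deductions of the shape ``if $u \prec_c v$ while $v \prec_d u$, then the position of any third element is pinned down,'' exactly in the spirit of the equivalence-relation argument used for strict linear orders in Lemma~\ref{lem:forbid-pattern}, and I would repeatedly invoke the strict-weak-order axiom that negated comparability is transitive in order to upgrade incomparabilities into equivalences that can be moved across the maximal and minimal classes.
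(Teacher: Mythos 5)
Your proposal follows the paper's own argument essentially verbatim: part 2 by induction on $|O|$, placing the $\prec_b$-minimal element among the $\prec_a$-maximal ones on top (your justification that it is $\prec_b$-minimal outright is a correct expansion of the paper's terser remark), and part 1 via the same distinguished pair $x,y$, the same two-block partition driven by $\{t \mid t \prec_a y \vee t \prec_b x\}$, and the same endgame using the $\prec_b$-non-extremal element to manufacture the SPE killer. The only difference is that you leave the doubly-extremal claim as a plan rather than writing out the case analysis, but the plan you describe is exactly the paper's.
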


\noindent Lemma~\ref{lem:2p-swo-spe} below is a generalization of Corollary~\ref{cor:quasi-antagonist-open-union-closed-SPE} from an order-theoretic point of view and a special case thereof from a  topological point of view. Due lack of space and strong similarities with the proof of Lemma~\ref{lem:quasi-antagonist-closed-SPE}, the proof of Lemma~\ref{lem:2p-swo-spe} is in appendix.

\begin{lemma}\label{lem:2p-swo-spe}
Let $g$ be a game with two players $a$ and $b$, finitely many outcomes $O$, and strict weak order preferences void of the SPE killer. If each outcome corresponds to the union of an open set and a closed set, the game has an SPE.

Furthermore, for every node $\gamma$ of $g$ let $\{O^\gamma_1,\dots,O^\gamma_{n_{\gamma}}\}$ be a partition of the outcomes of $g_\gamma$ such that $\neg(x \prec_a y)$ and $\neg(x \prec_b y)$ for all $1\leq k < n_\gamma$ and $(x,y)\in O^\gamma_{k+1}\times O^\gamma_k$. There exists an SPE for $g$ such that the outcome induced at every node $\gamma$ belongs to $O^{\gamma}_{n_{\gamma}}$.

\begin{proof}
By induction on the number of outcomes, which holds up to two outcomes. Let us make a case disjunction for the inductive case. First main case, there is no partition $\{O_u,O_l\}$ of $O$ such that $\forall (x,y)\in O_u \times O_l, \neg(x \prec_a y)\,\wedge\, \neg(x \prec_b y)$. Let us make a nested case distinction. First nested case, there exists a $\prec_a$-non-extremal element and a $\prec_b$-non-extremal element, so by Lemma~\ref{lem:swo-antagonistic} let $<$ be a linear extension of $\prec_a$ such that $\prec_b \subseteq <^{-1}$. By~\cite{LP14} the antagonist game with preference $<$ has an SPE, which is also an SPE for $\prec_a$ and $\prec_b$.

Second nested case, one preference, \textit{e.g.}, $\prec_a$ has only extremal elements. By the partition assumption let $y$ be $\prec_b$-maximal and $\prec_a$-minimal. Let $Y$ be the set of plays with outcome $y$, and let us define a quasi-profile $q$ as follows. Let $\gamma$ be the parent of a $y$-pseudo-leaf. If $\gamma$ is owned by $a$, let $a$ ignore the $y$-pseudo-leaves at $\gamma$; otherwise let $b$ choose a $y$-pseudo-leaf at $\gamma$. Let us apply this construction recursively (an ordinal number of times) to the games in $G(g,q)$ that do not involve $y$, until the $Y'$ of each remaining game has empty interior. It is easy to check that $Y'$ is also closed since $Y$ is the union of an open set and a closed set, by assumption. So let $(\gamma_i)_{i\in I}$ be the shortest nodes that are not on any play with outcome $y$. The $g_{\gamma_i}$ do not involve $y$, so by induction hypothesis they have suitable SPE $s_i$ inducing some $x_i\in O$, which allow us to start the definition of a suitable SPE for $g$. Let us define $g'$ by modification of the outcome function of $g$: let $v'(\gamma_i p) := x_i$ for all $i\in I$ and let $v'(p) := v(p)$ when $\gamma_i\not\sqsubseteq p$ for all $i\in I$. Let $M_a$ be the $\prec_a$-maximal outcomes, and let us define a quasi-profile $q'$ for $g'$ as follows. Let $\gamma$ be the parent of a $M_a$-pseudo-leaf, \textit{i.e.} a shortest node involving outcome in $M_a$ only. If $\gamma$ is owned by $a$, let $a$ choose a $M_a$-pseudo-leaf at $\gamma$; otherwise let $b$ ignore the $M_a$-pseudo-leave at $\gamma$. Let us apply this recursively to the games in $G(g',q)$ that involve $y$. In the remaining games $a$ can deviate from a play with outcome $y$ only to reach a $(O\backslash)M_a$-pseudo-leaf, so every profile that  follows plays with outcome $y$ whenever possible is a suitable SPE.



Second main case, there exist partitions $\{O_u,O_l\}$ of $O$ such that $\forall (x,y)\in O_u \times O_l, \neg(x \prec_a y)\,\wedge\, \neg(x \prec_b y)$. Among these partitions let us consider the one with the smallest possible $O_u$. Let us make a further case disjunction. First case, $|O_l| > 1$. As is now customary, let us start defining a suitable SPE for $g$ by using the induction hypothesis on the maximal subgames involving only outcomes in $O_l$, and on the game derived from $g$ by replacing outcomes in $O_l$ with a fresh outcome $y$ that is the new $\prec_a$ and $\prec_b$-minimum.  

Second case, $|O_l| = \{y\}$. By minimality of $|O_u|$, there is no partition $\{O_{uu},O_{ul}\}$ of $O_u$ such that $\forall (x,y)\in O_{uu} \times O_{ul}, \neg(x \prec_a y)\,\wedge\, \neg(x \prec_b y)$. Therefore the situation is reminiscent of the first main case above, but for $\prec_a\mid_{O_u}$ and $\prec_b\mid_{O_u}$ instead of $\prec_a$ and $\prec_b$. In both nested cases from the first main case, there exists some $x_n$ that is, \textit{e.g.}, $\prec_a\mid_{O_u}$-minimal and $\prec_b\mid_{O_u}$-maximal. Applying the proof of Lemma~\ref{lem:quasi-antagonist-closed-SPE} almost verbatim yields a suitable SPE for $g$

\end{proof}
\end{lemma}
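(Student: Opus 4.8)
The plan is to argue by induction on the number $|O|$ of outcomes, the base case of at most two outcomes being immediate. For the inductive step I would split according to whether the preference pair admits a \emph{decomposing partition}, that is, a partition $\{O_u,O_l\}$ of $O$ with $\neg(x\prec_a y)\wedge\neg(x\prec_b y)$ for all $(x,y)\in O_u\times O_l$; such a partition splits the outcomes into an upper block everyone weakly prefers and a lower block, and it is exactly the hook on which the induction hypothesis is applied.

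If no decomposing partition exists the preferences are ``indecomposable'', and here I would appeal to Lemma~\ref{lem:swo-antagonistic}. When both $\prec_a$ and $\prec_b$ have a non-extremal element, part~1 of that lemma forces $\prec_a\cap\prec_b=\emptyset$, and part~2 then produces a linear order $<$ with $\prec_a\subseteq\,<$ and $\prec_b\subseteq\,<^{-1}$; the game is genuinely antagonistic for $<$, so the result of~\cite{LP14} on two-player antagonistic quasi-Borel games delivers an SPE, which remains an SPE for the coarser $\prec_a,\prec_b$. The residual subcase, where one preference (say $\prec_a$) has only extremal elements, is the one I expect to cost the most work: I would isolate an outcome $y$ that is $\prec_b$-maximal and $\prec_a$-minimal and manage it by a quasi profile that, at each parent of a $y$-pseudo-leaf, has $b$ \emph{choose} $y$ and $a$ \emph{ignore} $y$; iterating this an ordinal number of times over the games of $G(g,q)$ not involving $y$ shrinks the $y$-region to a set with empty interior, which stays closed precisely because the $y$-outcome was open union closed. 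The induction hypothesis on the $y$-free subgames, together with a second quasi profile handling the $\prec_a$-maximal outcomes, then completes the construction.

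If a decomposing partition does exist I would choose one with $O_u$ of minimal cardinality. When $|O_l|>1$ the induction hypothesis applies twice: once to the maximal subgames whose outcomes all lie in $O_l$, and once to the game obtained by collapsing all of $O_l$ to a single fresh $\prec_a,\prec_b$-minimum $y$; gluing the resulting SPE is routine because the upper block is uniformly weakly preferred and so no profitable deviation crosses the block boundary. When $O_l=\{y\}$ is a singleton, minimality of $O_u$ makes $O_u$ itself indecomposable, so restricting to $O_u$ reproduces the indecomposable analysis and yields an outcome $x_n$ that is $\prec_a\!\mid_{O_u}$-minimal and $\prec_b\!\mid_{O_u}$-maximal. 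This is exactly the configuration of Lemma~\ref{lem:quasi-antagonist-closed-SPE}, with $y$ the common worst outcome and a closed $y$-region, so I would reuse its weak-stop/strong-stop transformation machinery almost verbatim.

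The main obstacle, threaded through every case, is maintaining the ``furthermore'' refinement: the SPE must induce, at each node $\gamma$, an outcome in the top block $O^\gamma_{n_\gamma}$ of the prescribed local partition. Keeping this invariant stable under the quasi-profile surgeries and under the collapsing of blocks to fresh outcomes is the delicate bookkeeping, and it is why the argument leans on the pseudo-leaf and $G(g,q)$ formalism rather than on plain backward induction. The single hardest step is the extremal-only subcase, where the open/closed structure of the outcome sets must survive the ordinal-length recursion that empties the interior of the $y$-region before the topological hypothesis of Corollary~\ref{cor:quasi-antagonist-open-union-closed-SPE} can be invoked.
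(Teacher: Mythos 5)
Your proposal follows the paper's own proof essentially step for step: the same induction on $|O|$, the same case split on the existence of a decomposing partition $\{O_u,O_l\}$, the same appeal to Lemma~\ref{lem:swo-antagonistic} and to the antagonistic result of~\cite{LP14} in the indecomposable case, the same quasi-profile surgery around the $\prec_b$-maximal/$\prec_a$-minimal outcome $y$ in the extremal-only subcase, and the same minimal-$O_u$ reduction to the machinery of Lemma~\ref{lem:quasi-antagonist-closed-SPE} when $O_l$ is a singleton. It is correct and takes the same approach as the paper.
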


\begin{theorem}\label{thm:2player-swo-pref-SPE}
Let $g$ be a game with two players $a$ and $b$, finitely many outcomes, a $\mathcal{D}_{\omega_1}$-measurable outcome function, strict weak order preferences such that $\neg(z \prec_a y \prec_a x \,\wedge\, x \prec_b z \prec_b y)$ for all outcomes $x$, $y$ and $z$. Then the game has an SPE. 

Furthermore, for every node $\gamma$ of $g$ let $\{O^\gamma_1,\dots,O^\gamma_{n_{\gamma}}\}$ be a partition of the outcomes of $g_\gamma$ such that $\neg(x \prec_a y)$ and $\neg(x \prec_b y)$ for all $1\leq k < n_\gamma$ and $(x,y)\in O^\gamma_{k+1}\times O^\gamma_k$. There exists an SPE for $g$ such that the outcome induced at every node $\gamma$ belongs to $O^{\gamma}_{n_{\gamma}}$.

\begin{proof}
By induction on the levels in the difference hierarchy of the sets of plays corresponding to the outcomes. The base case holds by Lemma~\ref{lem:2p-swo-spe}.

For the inductive case, let $y$ be an outcome whose corresponding set $Y$ has level more than one in the difference hierarchy, and let us make a case disjunction depending on the last step of the construction of $Y$. The remainder of the proof can be taken almost verbatim from the proof of Lemma~\ref{lem:quasi-antagonist-Delta02-SPE}, but by replacing "GP-SPE" with "suitable SPE", and by invoking Lemma~\ref{lem:2p-swo-spe} or the induction hypothesis instead of Corollary~\ref{cor:quasi-antagonist-open-union-closed-SPE}.
\end{proof}
\end{theorem}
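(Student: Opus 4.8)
The plan is to obtain Theorem~\ref{thm:2player-swo-pref-SPE} from Lemma~\ref{lem:2p-swo-spe} by transfinite induction on the difference-hierarchy levels of the finitely many sets of plays $\{v^{-1}(o)\}_{o\in O}$, in exactly the way Lemma~\ref{lem:quasi-antagonist-Delta02-SPE} was lifted from Corollary~\ref{cor:quasi-antagonist-open-union-closed-SPE}. The base case, in which every outcome set is the union of an open set and a closed set, is precisely Lemma~\ref{lem:2p-swo-spe}, which already supplies the ``suitable SPE'' of the furthermore clause (the node-wise top-class property). For the inductive step I would fix an outcome $y$ whose set $Y := v^{-1}(y)$ has level greater than one and split on the last step used to build $Y$ as a quasi-difference set (Definition~\ref{def:qdh}): either $Y = \bigcup_i \gamma_i Y_i$ with the $\gamma_i$ pairwise non-comparable and each $Y_i$ of strictly lower level, or $Y$ is the complement in $[T]$ of such a union.

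In both cases the argument transcribes the proof of Lemma~\ref{lem:quasi-antagonist-Delta02-SPE}, with ``GP-SPE'' replaced by ``suitable SPE''. Apply the induction hypothesis to each subgame $g_{\gamma_i}$ to obtain a suitable SPE $s_i$, which induces a single outcome, and then define $g'$ from $g$ by letting every play through $\gamma_i$ carry that induced outcome, so that each ball $\gamma_i[T_{\gamma_i}]$ becomes a constant clopen region. In the complementation case every play \emph{outside} the balls already carries $y$, hence in $g'$ each outcome $z\neq y$ occupies only a union of whole balls (an open set) while $Y$ becomes the union of the open ``balls where $s_i$ induces $y$'' and the closed complement of the balls; all outcome sets are then open-union-closed and Lemma~\ref{lem:2p-swo-spe} applies directly. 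In the union case $y$ does not occur outside the balls, so $Y$ becomes open in $g'$ and its level has strictly dropped; here I would invoke the induction hypothesis on $g'$. Gluing the $s_i$ onto the resulting profile for $g'$ yields the profile for $g$, and the node-wise top-class property is inherited because every piece contributes a suitable SPE and the gluing only concatenates their behaviours on disjoint parts of the tree.

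The main obstacle, and the one genuine departure from Lemma~\ref{lem:quasi-antagonist-Delta02-SPE}, is making the induction well-founded when \emph{all} outcomes must be controlled simultaneously. In the model lemma the recursion ran on the level of the single distinguished outcome $y$, the other outcomes being allowed arbitrary quasi-Borel complexity, because Corollary~\ref{cor:quasi-antagonist-open-union-closed-SPE} constrained only the topology of $Y$; by contrast Lemma~\ref{lem:2p-swo-spe} demands that \emph{every} outcome set be open-union-closed, so the induction measure must bound all of them at once. The complementation case is harmless, since it lands straight in the base case, but the union case uses the induction hypothesis and there one must verify that replacing the balls by constants does not spoil the other outcomes: for $z\neq y$ the new set is $(Z\cap C)\cup U_z$ with $C$ the closed complement of the balls and $U_z$ open, and intersecting $Z$ with the closed set $C$ can a priori nudge its level upward. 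The heart of the argument is therefore to pin down a measure that provably strictly decreases along $g\mapsto g'$ and along the passage to the subgames $g_{\gamma_i}$ --- for instance the number of outcomes whose set is not open-union-closed, or a natural sum of the difference-hierarchy levels --- and to check, most cleanly through the quasi-difference presentation and Observation~\ref{obs:qdh-inter-ball}, that the relabelling and restriction operations involved keep every surviving outcome at a level no greater than before while strictly lowering that of $y$. Once this monotonicity is secured, the remainder is the routine transcription sketched above.
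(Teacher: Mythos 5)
Your proposal follows the paper's proof essentially verbatim: transfinite induction on the difference-hierarchy levels of the outcome sets, base case supplied by Lemma~\ref{lem:2p-swo-spe}, and the union/complementation case split transcribed from Lemma~\ref{lem:quasi-antagonist-Delta02-SPE} with ``suitable SPE'' in place of ``GP-SPE''. The subtlety you isolate --- that, unlike in Lemma~\ref{lem:quasi-antagonist-Delta02-SPE} where only the level of the single set $Y$ drives the recursion, here the induction measure must control \emph{all} outcome sets simultaneously because Lemma~\ref{lem:2p-swo-spe} constrains every outcome --- is a genuine point, but the paper leaves it entirely implicit, so your treatment is if anything more explicit than the original.
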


\noindent Theorem~\ref{thm:2player-swo-pref-SPE} considers two-player games only. Observation~\ref{obs:3player-swo} shows that absence of the SPE killer is no longer a sufficient condition for a three-player game with strict weak order preferences to have an SPE.

\begin{observation}\label{obs:3player-swo}
Let three players $a$,$b$ and $c$ have preferences $z \prec_a y \prec_a x$ and $t \prec_b z \prec_b y$ and $x \prec_c t \prec_c y$. (and, \textit{e.g.}, $y \sim_a t$, $z \sim_b x$, and $y \sim_c z$ or $x \sim_c z$)
\begin{enumerate}
\item The SPE killer does not occur in the strict weak orders $\prec_a$ and $\prec_b$ and $\prec_c$.
\item\label{obs:3player-swo2} The following game with $\prec_a$ and $\prec_b$ and $\prec_c$ has no SPE.

\begin{tikzpicture}[node distance=1cm]
  \node(s){start};
  \node(s1)[right of = s]{};
  \node(a1)[right of = s1]{a};
  \node(b1)[right of = a1]{b};
  \node(a2)[right of = b1]{c};
  \node(b2)[right of = a2]{a};
  \node(a3)[right of = b2]{b};
  \node(b3)[right of = a3]{c};
  
  \node(inf1)[right of = b3]{};
  \node(inf)[right of = inf1]{x};

   \node(y1)[below of = a1]{y};
   \node(z1)[below of = b1]{z};
   \node(y2)[below of = a2]{t};
   \node(z2)[below of = b2]{y};
   \node(y3)[below of = a3]{z};
   \node(z3)[below of = b3]{t};

  \draw [->] (s) -- (a1);
  \draw [->] (a1) -- (b1);
  \draw [->] (b1) -- (a2);
  \draw [->] (a2) -- (b2);
  \draw [->] (b2) -- (a3);
  \draw [->] (a3) -- (b3);  
  \draw [->] (a1) -- (y1);
  \draw [->] (b1) -- (z1);
  \draw [->] (a2) -- (y2); 
  \draw [->] (b2) -- (z2); 
  \draw [->] (a3) -- (y3); 
  \draw [->] (b3) -- (z3);   
   
  \draw [dashed] (b3) -- (inf);
  \end{tikzpicture}
\end{enumerate}

\begin{proof}
For~\ref{obs:3player-swo2}. Towards a contradiction let us assume that there exists an SPE for the game. Let us consider a node where player $a$ chooses $y$. Then at the node right above it $c$ chooses to continue to benefit from $y$, and at the node above $b$ chooses to continue, too. The induced outcome at the node further above is $y$ regardless of the choice of $a$, and so on up to the root.

Let us make a case disjunction: first case, there exists infinitely many nodes where $a$ chooses $y$, so $b$ and $c$ always continue by the remark above, so $a$ has an incentive to continue too, to induce outcome $x$, contradiction. Second case, there exists a node  below which $a$ always continues. From then on, one player must stop at some point, otherwise the outcome is $x$ and $c$ has an incentive to stop. The first player to stop cannot be $b$, otherwise $a$ would stop before $b$, and it cannot be $c$, otherwise $b$ would stop before $c$, contradiction. 
\end{proof}
\end{observation}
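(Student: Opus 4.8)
The plan is to dispatch the first item by a direct finite check and to prove the second by a backward-propagation argument in the spirit of the centipede game. For item~1, the \emph{SPE killer} pattern $w \prec_p v \prec_p u \,\wedge\, u \prec_q w \prec_q v$ requires a strictly $\prec_p$-increasing triple that is strictly $\prec_q$-ordered in the rotated-and-inverted way. Since each of $\prec_a,\prec_b,\prec_c$ is a strict weak order with essentially one tie (e.g.\ $y \sim_a t$, $z \sim_b x$), the available strict three-chains are very constrained, and one verifies over the finitely many ordered pairs of players that whenever player $p$ has a strict three-chain $w \prec_p v \prec_p u$, the matching pair of strict inequalities $u \prec_q w$ and $w \prec_q v$ cannot both hold for the other player $q$, because one of them would fall inside a tie of $\prec_q$. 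This is routine bookkeeping.

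For item~2, I would label the spine nodes $v_0,v_1,v_2,\dots$, owned cyclically by $a,b,c$, with stopping at $v_n$ producing $y$, $z$, $t$ according to $n \bmod 3$, and with the all-continue play producing $x$. The first key step is a \emph{propagation lemma}: in any SPE, if player $a$ stops for $y$ at a node $v_n$, then the outcome induced at $v_n$ is $y$; walking upward, the owner of $v_{n-1}$ (player $c$) strictly prefers $y$ to its stop-value $t$ so continues, the owner of $v_{n-2}$ (player $b$) strictly prefers $y$ to $z$ so continues, and the owner of $v_{n-3}$ (player $a$) is indifferent between its stop-value $y$ and the continuation $y$. Hence, by induction, the induced outcome is $y$ all the way up to the root, and in particular every $b$- and $c$-node above $v_n$ continues.

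The proof then splits on whether $a$ stops for $y$ infinitely often. If it does, then by the propagation lemma every $b$- and $c$-node continues (each lies above some $a$-stop node), so from any $a$-node the all-continue deviation yields $x$; since $x \succ_a y$, player $a$ profits by never stopping, contradicting subgame perfection. If instead there is an $a$-node $v_N$ below which $a$ always continues, then never stopping from $v_N$ on would induce $x$, which $c$ can beat by stopping for $t$ (as $t \succ_c x$); so some player stops at a first node $v_m > N$, necessarily $b$ or $c$. If $b$ stops at $v_m$ (so $m \equiv 1$), the induced outcome at the $a$-node $v_{m-1} \ge N$ is $z$, and $a$ would rather stop for $y \succ_a z$, an earlier stop, contradiction. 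If $c$ stops at $v_m$ (so $m \equiv 2$), the induced outcome at the $b$-node $v_{m-1} \ge N$ is $t$, and $b$ would rather stop for $z \succ_b t$, again an earlier stop, contradiction. Hence no SPE exists.

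The main obstacle, and the part needing the most care, is the index bookkeeping in the second case: one must take $v_N$ to be an $a$-node so that the first stopper after $N$ is strictly above $N$, which guarantees that $v_{m-1}$ still lies in the region $\ge N$ where $a$ is known to continue. This is exactly what makes the ``$a$ (resp.\ $b$) would stop earlier'' contradictions legitimate rather than circular.
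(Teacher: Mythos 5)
Your argument for item 2 is the paper's own proof made more precise: the upward propagation of outcome $y$ from any $a$-stop node, the dichotomy on whether $a$ stops infinitely often, and the ``first stopper cannot be $b$, cannot be $c$'' contradictions all match, and your index bookkeeping (taking $v_N$ to be an $a$-node so that $v_{m-1}\ge N$) correctly fills in what the paper leaves implicit. Item 1 is indeed a routine finite check that the paper does not even write out (minor quibble: the required comparison for the second player sometimes fails by a strict reversal, e.g.\ $t\prec_a x$ against the needed $x\prec_a t$, not only by falling inside a tie).
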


\noindent Proposition~\ref{prop:po-SPE-le-skp} below shows that considering only strict weak orders incurs a loss of generality. 

\begin{proposition}\label{prop:po-SPE-le-skp}
Let us define two binary relations by $\gamma \prec_a y \prec_a x$ and $z \prec_a \beta \prec_a \alpha$ and $x \prec_b z \prec y$ and $\alpha \prec_b \gamma \prec_b \beta$.
\begin{enumerate}
\item The SPE killer occurs in every strict weak order extensions of $\prec_a$ and $\prec_b$.
\item\label{prop:po-SPE-le-skp2} $\mathcal{D}_{\omega_1}$-games with players $a$ and $b$ and preferences $\prec_a$ and $\prec_b$ have SPE. 
\end{enumerate}

\begin{proof}
\begin{enumerate}
\item Let $\prec'_a$ be a strict weak order extension of $\prec_a$. If $z \prec'_a y$, the SPE killer occurs, with $x$. If $\neg(z \prec'_a y)$, then $\gamma \prec'_a \beta$ and the SPE killer occurs, with $\alpha$.
\item It suffices to prove the claim for games where each outcome set is the union of an open set and a closed set. (Then using a transfinite induction as in the proof of Lemma~\ref{lem:quasi-antagonist-Delta02-SPE} will do.) The techniques from Lemmas~\ref{lem:quasi-antagonist-closed-SPE} and \ref{lem:2p-swo-spe} are suitable here, and used without details. If the outcome $x$ does not occur in the game, note that $\prec_a\mid_{O\backslash \{x\}}$ and $\prec_b\mid_{O\backslash \{x\}}$ can be extended into the strict weak orders $z \prec'_a \gamma \sim'_a \beta \prec'_a y \sim'_a \alpha$ and $z \sim'_b \alpha \prec'_b \gamma \prec'_b y \sim'_b \beta$, respectively, and that the SPE killer is absent from these. So by Theorem~\ref{thm:2player-swo-pref-SPE} the game has an SPE wrt $\prec'_a$ and $\prec'_b$ and therefore also wrt $\prec_a$ and $\prec_b$. (And likewise if the outcome $\alpha$ does not occur in the game.)

Now one can reduce the set for $x$ to a closed set by letting $a$ choose the clopen balls with constant outcome $x$ and by letting $b$ ignore them. The plays with outcomes different from $x$ can be seen as belonging to a union of subgames without $x$, so by the remark above they have SPE. It allows us to replace these subgames with (pseudo)-leaves with outcomes the ones induced by the SPE. Now one can let $a$ choose the pseudo-leaves with outcome $\alpha$ and $b$ ignore them, which yields a game without outcome $\alpha$. So by the remark above there is an SPE for the game.
\end{enumerate}
\end{proof}
\end{proposition}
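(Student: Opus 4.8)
The plan is to treat the two claims separately; the first is a short order-theoretic argument and the second carries the real content. For Claim~1, I would fix an arbitrary strict weak order extension $\prec'_a$ of $\prec_a$ and split on whether $z \prec'_a y$. If $z \prec'_a y$, then since the extension preserves $y \prec_a x$ we get $z \prec'_a y \prec'_a x$, which together with the retained relations $x \prec_b z \prec_b y$ is exactly the SPE killer on $(x,y,z)$. Otherwise $\neg(z \prec'_a y)$, and I would exploit that the complement of a strict weak order is transitive: were $\neg(\gamma \prec'_a z)$, then $\neg(\gamma \prec'_a z) \wedge \neg(z \prec'_a y)$ would force $\neg(\gamma \prec'_a y)$, contradicting the preserved $\gamma \prec'_a y$; hence $\gamma \prec'_a z$, and with $z \prec'_a \beta$ this gives $\gamma \prec'_a \beta$. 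Combined with $\beta \prec_a \alpha$ we obtain $\gamma \prec'_a \beta \prec'_a \alpha$, which with $\alpha \prec_b \gamma \prec_b \beta$ is the SPE killer on $(\alpha,\beta,\gamma)$. Since all the $\prec_b$-relations used survive in any extension of $\prec_b$, the killer is unavoidable.

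For Claim~2, I would first record two structural facts: $x$ and $\alpha$ are exactly the outcomes that are simultaneously $\prec_a$-maximal and $\prec_b$-minimal, and the substitution $(x,y,z)\leftrightarrow(\alpha,\beta,\gamma)$ is a symmetry of both preferences, so $x$ and $\alpha$ may be handled interchangeably. As in Lemma~\ref{lem:quasi-antagonist-Delta02-SPE}, a transfinite induction on the difference-hierarchy levels of the outcome sets reduces the problem to the case where every outcome set is the union of an open set and a closed set, which is the case I would actually treat.

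The heart of the proof is a removal remark: if $x$ does not occur, then $\prec_a\!\mid_{O\setminus\{x\}}$ and $\prec_b\!\mid_{O\setminus\{x\}}$ admit killer-free strict weak order extensions, so Theorem~\ref{thm:2player-swo-pref-SPE} yields an SPE (which is then an SPE for the coarser $\prec_a,\prec_b$ as well); by the symmetry the same holds when $\alpha$ is absent. I would give the extensions explicitly, e.g.\ $z \prec'_a \gamma \sim'_a \beta \prec'_a y \sim'_a \alpha$ and $z \sim'_b \alpha \prec'_b \gamma \prec'_b y \sim'_b \beta$, and verify killer-freeness as follows: a $\prec'_a$-increasing triple $z' \prec'_a y' \prec'_a x'$ must take one element from each of the three $\prec'_a$-classes $\{z\}$, $\{\gamma,\beta\}$, $\{y,\alpha\}$, forcing $z' = z$; but $z$ lies in the bottom $\prec'_b$-class, so $x' \prec'_b z'$ is impossible and the killer cannot form. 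This remark is the crux precisely because Claim~1 shows that \emph{no} extension of the full six-outcome preference is killer-free: the argument works only because deleting either $x$ or $\alpha$ destroys exactly the obstruction, and I expect verifying killer-freeness of the chosen extensions to be the main delicate point.

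Finally I would peel off $x$ and then $\alpha$ with the quasi-profile technique of Lemmas~\ref{lem:quasi-antagonist-closed-SPE} and~\ref{lem:2p-swo-spe}. Since $x$ is $\prec_a$-maximal and $\prec_b$-minimal, the promise ``$a$ chooses every $x$-pseudo-leaf and $b$ ignores every $x$-pseudo-leaf'' is rational irrespective of later choices; after using it to shrink the open part of the $x$-set, the residual subgames avoid $x$ and hence have SPE by the removal remark, so I may replace them by pseudo-leaves carrying the induced outcomes, leaving $x$ confined to a closed set. Repeating the same maneuver for $\alpha$ produces a game without $\alpha$, to which the removal remark applies directly. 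Gluing the promised choices with the recursively obtained SPE gives an SPE for the original game, the only care needed being to ensure the two peeling steps do not reintroduce the eliminated outcome.
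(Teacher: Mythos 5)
Your proposal is correct and follows essentially the same route as the paper: the same case split on $z \prec'_a y$ for Claim~1, and for Claim~2 the same reduction to open-union-closed outcome sets, the same removal remark via the explicit killer-free extensions and Theorem~\ref{thm:2player-swo-pref-SPE}, and the same two-stage peeling of $x$ and then $\alpha$ by rational promises. One small caveat: your killer-freeness check for the extensions only treats the orientation where $\prec'_a$ supplies the increasing triple; the swapped orientation (a $\prec'_b$-increasing triple $z'\prec'_b y'\prec'_b x'$ with $x'\prec'_a z'\prec'_a y'$) must also be excluded, which it is, since it forces $x'=\beta$ and $z'=\alpha$, but $\alpha$ is $\prec'_a$-maximal so $z'\prec'_a y'$ fails.
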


\section*{Acknowledgements}

I thank Vassilios Gregoriades and Arno Pauly for useful discussions. The author is supported by the ERC inVEST (279499) project.

\bibliographystyle{plain}
\bibliography{article}

\end{document}